\newcommand\cotimes{\widehat{\otimes}}
\renewcommand\H{\mathcal{H}}
\newcommand\D{\mathcal{D}}
\renewcommand\L{\mathcal{L}}
\newcommand{\scalar}[2]{\langle #1\,,#2 \rangle}
\newcommand{\norm}[1]{|\negthinspace|#1|\negthinspace|}
\renewcommand{\d}{\operatorname{d}\negthinspace}
\newcommand{\N}{\mathcal{N}}
\renewcommand{\o}{\otimes}
\newcommand{\up}{\uparrow}
\newcommand{\down}{\downarrow}
\newtheorem{theorem}{Theorem}
\title{Boundary Dynamics Driven Entanglement}
\date{\today}
\author{
		A. Ibort\footnote{On leave of absence from: Depto.\ de Matem\'aticas, Univ. Carlos III de Madrid, Avda.\ de la
Universidad 30, 28911 Legan\'es, Madrid, Spain. This work has been partially supported by the Spanish MICIN grant
MTM 2010-21186-C02-02, QUITEMAD P2009 ESP-1594, and Fundaci\'on Caja Madrid.} \\ Dept.\ of Mathematics, Univ.\ of California at Berkeley\\ Berkeley CA 94720, USA \\ albertoi@math.uc3m.es
		\and 
		G. Marmo\footnote{G.M.\ would like to acknowledge the support provided by the Santander/UCIIIM Chair of Excellence programme 2011-2012.} \\ Dipartimento di Fisica, Universit\`a di Napoli ``Federico II'' \\ INFN-Sezione di Napoli\\ Via Cintia Edificio 6, I--80126 Napoli, Italy \\ marmo@na.infn.it
		\and
		J.M.\ P\'erez-Pardo\footnote{J.M.P.\ is supported by UCIIIM University through the PhD Program Grant M02-0910, 2012 grant for ``Movilidad de investigadores de la UCIIIM'', Spanish MICIN grant MTM 2010-21186-C02-02, QUITEMAD P2009 ESP-1594.  Thanks to the Dept.\ of Mathematics at UC Berkeley for its hospitality. Special thanks to Dipt.\ di Fisica at Universit\`a di Napoli ``Federico II'' for its support and hospitality.} \\ Depto.\ de Matem\'aticas, Univ.\ Carlos III de Madrid\\ Avda. de la Universidad 30, 28911 Legan\'es, Madrid, Spain \\ jmppardo@math.uc3m.es
		}
\begin{document}

%\author{A. Ibort\footnote{On leave of absence from: Depto.\ de Matem\'aticas, Univ. Carlos III de Madrid, Avda.\ de la
%Universidad 30, 28911 Legan\'es, Madrid, Spain. This work has been partially supported by the Spanish MICIN grant
%MTM 2010-21186-C02-02, QUITEMAD P2009 ESP-1594, and Fundaci\'on Caja Madrid.}}
%\affiliation{Dept.\ of Mathematics, Univ.\ of California at Berkeley, Berkeley CA 94720, USA}
%\email{albertoi@math.uc3m.es}
%\author{G. Marmo\footnote{G.M.\ would like to acknowledge the support provided by the Santander/UCIIIM Chair of Excellence programme 2011-2012.}}
%\affiliation{Dipartimento di Fisica, Universit\`a di Napoli ``Federico II'' and INFN-Sezione di Napoli, Via Cintia Edificio 6, I--80126 Napoli, Italy}
%\email{marmo@na.infn.it}
%\author{J.M. P\'erez-Pardo\footnote{J.M.P.\ is supported by UCIIIM University through the PhD Program Grant M02-0910, 2012 grant for ``Movilidad de investigadores de la UCIIIM'', Spanish MICIN grant MTM 2010-21186-C02-02, QUITEMAD P2009 ESP-1594.  Thanks to the Dept.\ of Mathematics at UC Berkeley for its hospitality. Special thanks to Dipt.\ di Fisica at Universit\`a di Napoli ``Federico II'' for its support and hospitality.}}
%\affiliation{Depto.\ de Matem\'aticas, Univ.\ Carlos III de Madrid, Avda. de la Universidad 30, 28911 Legan\'es, Madrid, Spain.}
%\email{jmppardo@math.uc3m.es}

%\pacs{03.65.Ca, 03.65.Db, 03.67.Bg, 02.30.Sa, 02.40.Ky, 02.40Vh, 2.30Yy}
%\keywords{Boundary Control, Self-Adjoint Extensions, Non-separable Boundary Conditions, Boundary Generated Entanglement}

\maketitle

\newpage
\begin{abstract}  
We will show how it is possible to generate entangled states out of unentangled ones on a bipartite system by means of dynamical boundary conditions.
The auxiliary system is defined by a symmetric but not self-adjoint Hamiltonian and the space of self-adjoint extensions of the bipartite system is studied.
It is shown that only a small set of them leads to separable dynamics and they are characterized.
Various simple examples illustrating this phenomenon are discussed,  in particular we will analyze the hybrid system consisting of a planar quantum rotor and a spin system under a wide class of boundary conditions.
\end{abstract}

\noindent{\it Keywords\/}: Boundary Control, Self-Adjoint Extensions, Non-separable Boundary Conditions, Boundary Generated Entanglement
\newline
\noindent PACS: {03.65.Ca, 03.65.Db, 03.67.Bg, 02.30.Sa, 02.40.Ky, 02.40Vh, 2.30Yy}

\tableofcontents

%%%%%%%%%%%%%%%%%%%%%%%%%%%%%%%%%%%%%%%%%%%%%%%%%%%%%%%%%%%%%%%%%%%%%%%%%%%%%%%%%%%%%%%%%%%%%%%%%%%%%%%%%%%%%%%%%%%%%%%%%%%%%%%%%%%%%%%%%%%%%%%%%%%%%%%%%%%%%%%%%%%%%%%%%%%%%%%%%%%%%%%%%%%%%%%%%%%%%%%%%%%%%%%%%%%%%%%%%%%%%%%%%%%%%%%%%%%%%%%%%%%%%%%%
\section{Introduction}  There is an increasing interest in the physics associated to the ``boundary'' of a given physical system.   Because the boundary can be thought as an effective way of describing the interaction of the system with the external universe, its modeling could account for a number of significant physical phenomena.  

It is impossible to cover the range of physics associated to boundary structures in a few sentences.  We mention here
Casimir's effect, which is arguably one of the most conspicuous physical phenomena associated to the presence of boundaries (see for instance \cite{As06}, \cite{As08} and references therein for an extensive account of the role of boundary conditions and vacuum structures), the quantum Hall effect \cite{Mo88} and the appearance of Edge States \cite{As13}.   We would also like to mention here the possibility of describing topology change as a boundary effect.  This idea was already considered in \cite{Ba95} and further elaborated in relation with specific boundary conditions in \cite{As05}, but it has gained new impetus because of Wilczek's \emph{et al} \cite{Wi12} recent contributions to it.

In this paper we will explore how the manipulation of boundary conditions of composite systems allows to generate entangled states.    More precisely, consider two systems $A$ and $B$,  and assume that system $B$, which will be called the ``bulk'' or controlled system, is complete, i.e., its Hamiltonian $H_B$ is a Hermitean (self-adjoint) operator on a Hilbert space $\mathcal{H}_B$ and its evolution $U^B_t = \exp (itH_B)$ is unitary.  However the system $A$, or ``auxiliary'', is defined by a merely symmetric operator $H_A$ on a Hilbert space $\mathcal{H}_A$. In other words the evolution ``$U_t^A = \exp (itH_A)$'' will not be unitary until we have selected (if it exists) a self-adjoint extension of the operator $H_A$.   It is worth to point out here that such situation will actually arise whenever our system $A$ is defined in a bounded domain $\Omega_A$ in $\mathbb{R}^n$ with boundary $\partial \Omega_A$.   In such case the Hilbert space $\mathcal{H}_A$ is the space $\L^2(\Omega)$ of square integrable complex-valued functions on $\Omega_A$ and the Hamiltonian operator is
\begin{equation}\label{free}
H_A = -\frac{\hbar^2}{2m} \Delta_\eta  + V ,
\end{equation}
with $\Delta_\eta$ the Laplace-Beltrami operator defined by some metric $\eta$ on $\Omega_A$, and $V$ a potential function.   Under such circumstances it can be shown that the self-adjoint extensions of $H_A$ are determined by boundary conditions satisfied by the functions on the corresponding domain \cite{As05}.

The main observation which is relevant for the purposes of this paper is that, if we consider now the bipartite system defined on the Hilbert space $\mathcal{H}_A \hat\otimes \mathcal{H}_B$, the family of self-adjoint extensions of the symmetric Hamiltonian $H = H_A \otimes I + I \otimes H_B$ is much larger than the family of self-adjoint extensions of the standalone symmetric Hamiltonian $H_A$.   As it will be discussed along the paper, many of the possible self-adjoint extensions of the bipartite system generate entangled states out of separable ones.  In other words, the dynamics defined by many self-adjoint extensions of the composite system are not separable, i.e., they do not preserve separable states.

Separable dynamics for a class of hybrid composite systems will be characterized and it will be shown that they correspond to boundary conditions  defined by the tensor product of the operator defining the boundary conditions determining a self-adjoint extension of the system $A$ times the identity operator (see Theorem \ref{separab}).
Thus, self-adjoint extensions corresponding to boundary conditions with a different structure will define non-separable dynamics, and separable states will evolve into non-separable ones.    We will call such source of entangled states `boundary driven entanglement'.

It will be illustrated using a simple example how by choosing a non-trivial tensor product extension of a given self-adjoint extension of the system $A$, we obtain non-separable dynamics (see \S \ref{simple}).   Even more, we will show how by modifying the chosen self-adjoint extension, we can generate entangled states not only between the auxiliary system $A$ and system $B$, but even within system $B$ itself (as  long as it is a composite system itself).

Such instances will be discussed first by using a toy example consisting of the free particle  moving on the half-line as auxiliary system and a two-level system as a bulk system.    In this particular instance it will be shown that the ground state of the half-line (actually its only eigenstate) becomes entangled with the eigenstates of the bulk system and how such entangled state can be driven by modifying the boundary conditions compatible with such scenario.   Finally, we will discuss a ``quantum compass'', i.e. a planar rotor possessing a spin 1/2 system sitting inside it.   Now, two families of non-trivial boundary conditions for such system, extending in a non-trivial way quasi-periodic boundary conditions for the planar rotor \cite{As83}, will be considered and their spectral properties will be discussed (see \S \ref{quantum_rotor}).

%%%%%%%%%%%%%%%%%%%%%%%%%%%%%%%%%%%%%%%%%%%%%%%%%%%%%%%%%%%%%%%%%%%%%%%%%%%%%%%%%%%%%%%%%%%%%%%%%%%%%%%%%%%%%%%%%%%%%%%%%%%%
%%%%%%%%%%%%%%%%%%%%%%%%%%%%%%%%%%%%%%%%%%%%%%%%%%%%%%%%%%%%%%%%%%%%%%%%%%%%%%%%%%%%%%%%%%%%%%%%%%%%%%%%%%%%%%%%%%%%%%%%%%%%
\section{Boundary conditions and self-adjoint extensions}\label{salaplacian}

We start reviewing briefly the most salient aspects of the relation between self-adjoint extensions and boundary conditions by using the Laplace-Beltrami operator as an illuminating example.

Given a symmetric operator $T$ on a Hilbert space $\mathcal{H}$, this is, the operator $T$ has dense domain $\mathcal{D}_0 \subset \mathcal{H}$ and $T \subset T^\dagger$, we may use von Neumann's theorem \cite{Ne31} to describe all its self-adjoint extensions, if any (see for instance  \cite{Re75} for an exhaustive account of the theory).  Namely, we compute first its deficiency spaces $\mathcal{N}_\pm := \ker (T^\dagger \mp i \mathbb{I}) = \mathrm{Ran} (T \pm i\mathbb{I} )^\perp$.  Then there is a one-to-one correspondence between self-adjoint extensions of $T$  and unitary operators $K \colon \mathcal{N}_+ \to \mathcal{N}_-$. Von Neumann's theorem establishes that to any such unitary operator $K$ one can associate the self-adjoint operator $T_K$ with domain
\begin{equation}\label{domK}
\mathcal{D}_K = \mathcal{D}_0 \oplus (\mathbb{I} + K) \mathcal{N}_+ 
\end{equation}
and defined by
\begin{equation}\label{TK}
T_K(\Phi_0 \oplus (\mathbb{I} + K) \xi_+) = T\Phi_0 \oplus \, i(\mathbb{I} - K) \xi_+ , \qquad \forall \Phi_0 \in \mathcal{D}_0,\,  \xi_+\in \mathcal{N}_+  \,.
\end{equation}

In many occasions the operator $T$ is a differential operator on a manifold $\Omega$ with non-empty boundary $\partial \Omega$.    Let us consider, as an illustrative situation, a free particle moving on a curved manifold $\Omega$ with Riemannian metric $\eta$.   In such case, the Hamiltonian describing the geodesic motion is the negative Laplace-Beltrami operator $-\Delta_\eta$ (this is, we are in the situation of eq. \eqref{free} with $V \equiv 0$), that in local coordinates $x^i$, $i = 1, \ldots, n$, $n = \dim \Omega$, takes the explicit form:
\begin{equation}\label{laplacian}
\Delta_\eta = \frac{1}{\sqrt{|\eta |}} \frac{\partial}{\partial x^i} \eta^{ij} \sqrt{|\eta |}\frac{\partial}{\partial x^j} ,  
\end{equation}
with $|\eta| = \det (\eta_{ij})$.     It is natural to start by defining this operator in the domain $C_c^\infty(\mathrm{Int}(\Omega))$, i.e., in the set of complex-valued functions with compact support contained in the interior of $\Omega$, which is a dense subspace of the Hilbert space $\mathcal{H} = \L^2 (\Omega)$, the space of square integrable functions with respect to the Riemannian volume defined by $\eta$.    A simple integration by parts leads to
\begin{equation}\label{symmetric}
\scalar{\Phi}{\Delta_\eta\Psi} = \scalar{\Delta_\eta\Phi}{\Psi} \qquad \forall \Phi, \Psi \in C_c^\infty(\mathrm{Int}(\Omega))\,. 
\end{equation}
This shows that the operator $\Delta_\eta$ defined in the previous domain is symmetric.  The minimal closed extension of the operator $\Delta_\eta$ is defined on the domain $\mathcal{D}_0 = \mathcal{H}_0^2(\Omega)$, which is the closure of $C_c^\infty(\mathrm{Int}(\Omega))$ with respect to the Sobolev norm $|| \cdot ||_{2,2}$.  The domain $\mathcal{D}_0$ is just the Sobolev space of order 2 with functions that vanish at the boundary and such that their normal derivatives vanish too.

The adjoint operator $\Delta_\eta^\dagger$ is the operator defined in the domain  $\mathcal{D}^\dagger_0 = \{ \Phi \in \L^2(\Omega) \mid \Delta_\eta \Phi \in \L^2(\Omega) \}$.    Such operator $\Delta_\eta^\dagger$ is actually the maximal extension of $\Delta_\eta$ and, certainly $\Delta_\eta \subset \Delta_\eta^\dagger$.

A general result on operators commuting with conjugations shows the existence of self-adjoint extensions for $\Delta_\eta$, hence the existence of unitary operators $K\colon \mathcal{N}_+ \to  \mathcal{N}_-$ and the applicability of Neumann's theorem. eqs. \eqref{domK}, \eqref{TK}.  

Alternatively, we may argue as follows (see for instance \cite{As05} and references therein).   Consider the restriction to the boundary $\partial \Omega$ of functions in $\mathcal{D}_0^\dagger$.  Such restrictions will be denoted by $\varphi := \Phi\mid_{\partial \Omega}$.   In the same way we define the normal derivative $\dot{\varphi} := \partial \Phi /\partial \nu \mid_{\partial \Omega}$ as the outbound normal derivative along the boundary.  We will consider that both $\varphi$, $\dot{\varphi}$ are in $\L^2(\partial \Omega)$.   Repeating the integration by parts for elements $\Phi, \Psi \in \mathcal{D}_0^\dagger$ we will obtain
\begin{equation}\label{green}
\scalar{\Phi}{\Delta_\eta\Psi}-\scalar{\Delta_\eta\Phi}{\Psi} = \scalar{\dot{\varphi}}{\psi} -\scalar{\varphi}{\dot{\psi}}\,.
\end{equation}
The inner product in the r.h.s. of  the expression above is the one defined in $\L^2(\partial \Omega)$, namely, $\langle \varphi, \psi \rangle = \int_{\partial\Omega} \bar{\varphi}(x) \psi(x) d\mu_{\partial \Omega}(x)\,,$ where $\mu_{\partial\Omega}$ is the measure associated to the Riemannian metric induced at the boundary $\partial \Omega$ by $\eta$.    

Clearly, self-adjoint extensions of $\Delta_\eta$ will be determined by maximal subspaces of functions $\Phi$ in $\mathcal{D}_0^\dagger$ such that the bilinear form given by the r.h.s. of eq. \eqref{green} vanishes identically for the corresponding boundary values $\varphi$ and $\dot{\varphi}$ of $\Phi$.   

Such maximally isotropic spaces $W$ of boundary values can be easily characterized by computing their Cayley transform, i.e., we consider the linear isomorphism $C \colon \L^2(\partial \Omega) \oplus \L^2(\partial \Omega) \to \L^2(\partial \Omega) \oplus \L^2(\partial \Omega)$ defined by 
$$
C(\varphi, \dot{\varphi}) = \frac{1}{\sqrt{2}}(\varphi - i\dot{\varphi},  \varphi + i\dot{\varphi})  .
$$ 
The Cayley transform $C$ maps a maximally isotropic subspace $W$ onto the graph of a unitary operator $U \colon \L^2(\partial \Omega) \to  \L^2(\partial \Omega)$.  More explicitly $(\varphi, \dot{\varphi}) \in W$ iff there exists $U \in \mathcal{U}( \L^2(\partial \Omega))$ such that \cite{As05}:
\begin{equation}\label{asorey}
\varphi-i\dot{\varphi}=U(\varphi+i\dot{\varphi})\;.
\end{equation}
In this sense the space of self-adjoint extensions of the Laplace-Beltrami operator can be naturally identified with the unitary group of the Hilbert space of square integrable functions at the boundary of $\Omega$ and eq. \eqref{asorey} provides the explicit description of the corresponding domains. Unfortunately this description is complete only for one-dimensional Riemannian manifolds. Nevertheless, under some conditions on the unitary operator $U$, one can still characterize a wide class of self-adjoint extensions of the Laplace-Beltrami operator in arbitrary dimensions in terms of boundary conditions of  the form \eqref{asorey}, cf. \cite{Ib13} for a more detailed discussion. We will make an extensive use of this characterization in what follows.

%%%%%%%%%%%%%%%%%%%%%%%%%%%%%%%%%%%%%%%%%%%%%%%%%%%%%%%%%%%%%%%%%%%%%%%%%%%%%%%%%%%%%%%%%%%%%%%%%%%%%%%%%%%%%%%%%%%%%%%%%%%%
%%%%%%%%%%%%%%%%%%%%%%%%%%%%%%%%%%%%%%%%%%%%%%%%%%%%%%%%%%%%%%%%%%%%%%%%%%%%%%%%%%%%%%%%%%%%%%%%%%%%%%%%%%%%%%%%%%%%%%%%%%%%
\section{Self-adjoint extensions of symmetric bipartite systems}\label{bipartite_extensions}

Let us consider the case of a bipartite system $A\times B$ such that one of its subsystems is described by a symmetric operator. In particular we consider system $A$ to be defined as in the previous section by minus the Laplace-Beltrami operator on a Riemannian manifold $(\Omega_A,\eta_A)$, i.e., $A$ describes a free system on a manifold with boundary. System $B$ is defined by a self-adjoint operator $H_B$ on a Hilbert space $\H_B$ with dense domain $\operatorname{dom}(H_B)=\D_B$. The Hilbert space $\mathcal{H}_{AB}$ of pure states of the composite system is
$$
\mathcal{H}_{AB} := \H_A\cotimes\H_B=\L^2(\Omega_A)\cotimes\H_B \, ,
$$ 
that can be identified naturally with $\L^2(\Omega_A;\H_B)$\,.  Hence, pure states will be considered as square integrable maps $\Phi:\Omega_A\to\H_B$ with inner product
	\begin{equation}\label{inner product}
		\scalar{\Phi}{\Psi}_{AB}=\int_{\Omega_A}\scalar{\Phi(x)}{\Psi(x)}_{\H_B}\d\mu_\eta(x).
	\end{equation}
In what follows we will use the latter identification when appropriate.
The Hamiltonian operator of the composite system that we will consider is $H =-\Delta_\eta\otimes\mathbb{I}+\mathbb{I}\otimes H_B$, acting on states $\Phi$ as
	\begin{equation}\label{operatoraction}
		H\Phi = -\Delta_{\eta_A}\Phi + H_B\cdot\Phi,
	\end{equation}
with $(H_B\cdot\Phi)(x) = H_B(\Phi(x))$, $x\in\Omega_A$. 

The natural symmetric domain $\D_0$ of the  operator $H$ is now $\D_0=\D_{A0}\otimes\D_B$, where we are now borrowing the notation $\mathcal{D}_{A0}$ from section \ref{salaplacian} to denote the minimal closed extension of the Laplace-Beltrami operator defined on $\Omega_A$.   

Again $\mathcal{D}_{0}$ can be identified in a natural way with $\overline{\mathcal{C}_0^\infty(\Omega_A;\H_B)}^{\norm{\cdot}_{2,2}}$, where the completion is taken with respect to the Sobolev norm of order 2.  Notice that we can not consider the completion $\cotimes$ in the definition of $\mathcal{D}_0$ before because $\D_{A0}$ and $\D_B$ being dense, it would result that $\D_{A0}\cotimes\D_B = \H_A\hat{\otimes}\H_B = \mathcal{H}_{AB}$\,, but the operator $H$ is not bounded.  

The maximal extension of the operator $H$ is given by $\D_{A0}^\dagger\otimes\D_B$ using the notation of  Section \ref{salaplacian} again (notice that $H_B$ is self-adjoint already, then $\mathcal{D}_B^\dagger = \mathcal{D}_B$).   Computing the self-adjoint extensions of $H$ is best done by using its boundary data structure (i.e., Green's formula) like in the second part of Section \ref{salaplacian}.  In fact, integrating by parts we get the analogue of eq. \eqref{green}:
	\begin{equation}\label{green bipartite}
		\scalar{\Phi}{-\Delta_{\eta_A}\Psi+H_B\cdot\Psi}-\scalar{-\Delta_{\eta_A}\Phi+H_B\cdot\Phi}{\Psi}=\scalar{\varphi}{\dot{\psi}}-\scalar{\dot{\varphi}}{\psi}\,,
	\end{equation}
where the inner product at the boundary appearing in the r.h.s. of the previous equation is given simply by
	\begin{equation}\label{inner product boundary}
		\scalar{\varphi}{\psi}=\int_{\partial\Omega_A}\scalar{\varphi(x)}{\psi(x)}_{\H_B}\d\mu_{\partial\eta_A}(x)
	\end{equation}
and $\varphi$, $\dot{\varphi}$ are defined as before. Then $\varphi$, $\dot{\varphi}$  can be identified with functions on $\partial \Omega$ with values in $\mathcal{H}_B$ and the space of boundary data is now $\L^2(\partial\Omega_A;\H_B)\simeq\L^2(\partial\Omega_A)\hat{\otimes} \H_B $. 

Repeating the argument leading to eq. \eqref{asorey}, we obtain that the space of self-adjoint extensions of $H$, i.e., the space of maximally isotropic, closed subspaces of the bilinear boundary form defined by the r.h.s. of  eq. \eqref{green bipartite}, is parametrized by unitary operators $U\in\mathcal{U}(\L^2(\partial\Omega_A)\hat{\otimes}\H_B)$.  Thus, given an unitary operator $U\colon \L^2(\partial\Omega_A)\hat{\otimes}\H_B \to \L^2(\partial\Omega_A)\hat{\otimes}\H_B$, the domain $\D_U$ of the corresponding self-adjoint extension will consist of all functions $\Phi\in\D_{A0}^\dagger\otimes\D_B$ such that 
	\begin{equation}\label{asorey bipartite}
		\varphi-i\dot{\varphi}=U(\varphi+i\dot{\varphi})\;,\quad \varphi\;, \dot{\varphi}\in\L^2(\partial\Omega_A;\H_B)\;.
	\end{equation}

A similar result, but in a much more general situation, can be obtained certainly by using von Neumann's Theorem (now $\mathcal{H}_A$ and $\mathcal{H}_B$ are general complex separable Hilbert spaces and $H_A$, $H_B$ operators on them).

\begin{theorem}\label{deficiency}
Let $H_A$ be a densely defined, symmetric operator on the Hilbert space $\mathcal{H}_A$ and $H_B$ a bounded, self-adjoint operator on a Hilbert space $\mathcal{H}_B$ with discrete spectrum, then the deficiency spaces $\mathcal{N}_\pm$ of the symmetric operator $H = H_A\otimes \mathbb{I} + \mathbb{I}\otimes H_B$ are isomorphic to
$\N_{A,\pm}\cotimes\H_B$.
\end{theorem}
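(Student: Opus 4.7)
My plan is to exploit the discrete spectrum of $H_B$ to reduce the deficiency problem for $H$ to a countable family of spectrally shifted deficiency problems for $H_A$. First I would fix an orthonormal basis $\{e_n\}_{n\in\mathbb{N}}$ of $\H_B$ consisting of eigenvectors of $H_B$, $H_B e_n = \lambda_n e_n$ with $\lambda_n\in\mathbb{R}$, so that every element of $\H_A \cotimes \H_B$ admits a unique expansion $\Phi = \sum_n \phi_n \o e_n$ with $\phi_n\in\H_A$ and $\sum_n \norm{\phi_n}^2 < \infty$.

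Next I would identify the adjoint. Taking $\operatorname{dom}(H_A) \o_{\mathrm{alg}} \H_B$ as the natural symmetric domain of $H$ and using that $\mathbb{I}\o H_B$ is everywhere defined and bounded self-adjoint (so that adding it does not alter the adjoint's domain), a pairing computation against elements of the symmetric domain shows that $\Phi\in\operatorname{dom}(H^\dagger)$ iff $\phi_n\in\operatorname{dom}(H_A^\dagger)$ for every $n$ and $\sum_n \norm{(H_A^\dagger+\lambda_n\mathbb{I})\phi_n}^2<\infty$, with
$$
H^\dagger\Phi=\sum_n (H_A^\dagger+\lambda_n\mathbb{I})\phi_n\o e_n.
$$
Hence $\Phi\in\N_\pm=\ker(H^\dagger\mp i\mathbb{I})$ iff, coordinatewise, $(H_A^\dagger-(\pm i-\lambda_n)\mathbb{I})\phi_n=0$, which yields the orthogonal decomposition
$$
\N_\pm=\bigoplus_n \ker\bigl(H_A^\dagger-(\pm i-\lambda_n)\mathbb{I}\bigr)\o e_n.
$$

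To conclude I would invoke the classical fact that for a closed symmetric operator the function $z\mapsto\dim\ker(H_A^\dagger-z\mathbb{I})$ is constant on each connected component of $\mathbb{C}\setminus\mathbb{R}$. Since $\lambda_n$ is real, each $\pm i-\lambda_n$ lies in the same open half-plane as $\pm i$, so every summand above is isomorphic as a Hilbert space to $\N_{A,\pm}$; selecting such an isomorphism for each $n$ assembles into the desired isometry $\N_\pm\cong\bigoplus_n \N_{A,\pm}\o e_n\cong\N_{A,\pm}\cotimes\H_B$.

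The main obstacle is the coordinatewise characterization of $\operatorname{dom}(H^\dagger)$, which requires ruling out adjoint vectors whose coordinates fail individually to lie in $\operatorname{dom}(H_A^\dagger)$. Boundedness of $H_B$ is essential here, as it keeps the shifts $\lambda_n$ uniformly bounded, so that summability of $(H_A^\dagger+\lambda_n\mathbb{I})\phi_n$ is equivalent to summability of $H_A^\dagger\phi_n$ separately. Once this coordinate description is in place, the remaining steps are essentially automatic from the standard theory of deficiency subspaces of closed symmetric operators.
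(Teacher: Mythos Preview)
Your argument is correct and follows essentially the same route as the paper: diagonalize $H_B$, expand $\Phi$ along the eigenbasis, reduce the deficiency equation to the coordinatewise conditions $\phi_n\in\ker\bigl(H_A^\dagger-(\pm i-\lambda_n)\mathbb{I}\bigr)$, and then invoke the constancy of the generalized deficiency dimensions on each open half-plane to identify every summand with $\N_{A,\pm}$. The paper carries out exactly these steps (writing $-\Delta_{\eta_A}$ in place of the general $H_A$), citing \cite{Ak60} for the constancy result and assembling the final map coordinate by coordinate just as you do. Your treatment is in fact slightly more careful than the paper's: you make explicit the coordinatewise description of $\operatorname{dom}(H^\dagger)$ and the role of the boundedness of $H_B$ in ensuring that the summability conditions match, points the paper leaves implicit.
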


\begin{proof}  Let us assume for simplicity that $H_B$ has non degenerate discrete spectrum $\lambda_n$ with eigenvectors $\rho_n$, $H_B \rho_n = \lambda_n \rho_n$.  Then, the normalized eigenvectors $\rho_n$ define an orthonormal basis for $\H_B$ and any vector $\Phi\in\L^2(\partial\Omega_A)\hat{\otimes}\H_B$ has a unique representation as
	\begin{equation}\label{representation AB}
		\Phi=\sum_n\Phi_n\otimes\rho_n,\quad\Phi_n\in\L^2(\partial\Omega_A).
	\end{equation}
Hence, we get for vectors $\Phi^\pm \in \mathcal{N}_\pm = \ker (H^\dagger \mp i\mathbb{I})$:
	\begin{eqnarray*}
		(H^\dagger\mp iI)\Phi^\pm	& = & (-\Delta^\dagger_{\eta_A}\otimes\mathbb{I}+\mathbb{I}\otimes H_B) \left( \sum_n\Phi^\pm_n\otimes\rho_n \right)  \mp i I\Phi^\pm  =  \nonumber \\
							& = & \sum_n (-\Delta^\dagger_{\eta_A}\Phi^\pm_n+\lambda_n\Phi^\pm_n\mp i \Phi^\pm_n)\otimes\rho_n=0
	\end{eqnarray*}
which implies
	\begin{equation}
		-\Delta^\dagger_{\eta_A}\Phi^\pm_n\mp(i\mp\lambda_n)\Phi^\pm_n=0.
	\end{equation}
Thus $\Phi^\pm_n$ must belong to the generalized deficiency spaces 
\begin{align*}
	\N_{A,z_n} &= \{ \Phi^+\in \D_{A0}^\dagger \mid -\Delta_{\eta A}^\dagger \Phi^+ = z_n \Phi^+ \},\\
	\N_{A,\bar{z}_n } &= \{ \Phi^-\in \D_{A0}^\dagger \mid -\Delta_{\eta A}^\dagger \Phi^- = \bar{z}_n \Phi^- \} , 
\end{align*}
with $z_n = (-\lambda_n + i)$.    However, all generalized deficiency spaces of the form $\N_{A,z}$ with $\operatorname{Im} z>0$ are isomorphic, that is, $\operatorname{dim}\N_{A,z}$ is constant in the upper complex half-plane (similarly, if $\operatorname{Im} z<0$, then $\operatorname{dim}\N_{A,z}$ is constant in the lower half-plane) and hence $\N_{A,\pm} = \N_{A,\pm i}$ is isomorphic to $\N_{A,\pm(i\mp\lambda_n)}$ (see for instance \cite{Ak60}).  Let us denote a choice for such isomorphism by $\alpha^\pm_n:\N_{A,\pm(i\mp\lambda_n)}\to\N_{A,\pm}$. 

We have shown that the deficiency spaces $\N_{\pm}$ of the operator $H$ consist of vectors of the form $\sum_n\Phi^\pm_n\otimes\rho_n$ with $\Phi^\pm_n\in\N_{A,\pm(i\mp\lambda_n)}$. The isomorphism $\alpha^\pm: \N_{\pm}\to\N_{A,\pm}\cotimes\H_B$ defined by 
	\begin{equation}
		\alpha^\pm (\sum\Phi^\pm_n\otimes\rho_n)=\sum\alpha^\pm_n(\Phi^\pm_n)\otimes\rho_n
	\end{equation}
provides an explicit identification of $\N_\pm$ with $\N_{A,\pm}\cotimes\H_B$. \end{proof}

The previous argument generalizes to the case of a general self-adjoint operator $H_B$ by a judiciously use of its spectral representation.

Notice that as a consequence of the previous theorem, the space of self-adjoint extensions of the composite system defined by the Hamiltonian $H$ is given by the space of unitary operators $K \colon \mathcal{N}_{A,+}\cotimes\H_B \to \mathcal{N}_{A,-}\cotimes\H_B$\,, which is much larger that the space of self-adjoint extensions of the system $A$ alone.  In particular, the self-adjoint extensions defined by unitary operators of the form $K_+ \otimes \mathbb{I}$ are in one-to-one correspondence with self-adjoint extensions of the system $A$ alone, $K_A \colon \mathcal{N}_{A,+} \to \mathcal{N}_{A,-}$. 

%%%%%%%%%%%%%%%%%%%%%%%%%%%%%%%%%%%%%%%%%%%%%%%%%%%%%%%%%%%%%%%%%%%%%%%%%%%%%%%%%%%%%%%%%%%%%%%%%%%%%%%%%%%%%%%%%%%%%%%%%%%%
%%%%%%%%%%%%%%%%%%%%%%%%%%%%%%%%%%%%%%%%%%%%%%%%%%%%%%%%%%%%%%%%%%%%%%%%%%%%%%%%%%%%%%%%%%%%%%%%%%%%%%%%%%%%%%%%%%%%%%%%%%%%
\section{Separable dynamics and separable extensions}\label{separable dynamics}

It is clear that if we have two complete quantum systems $A$ and $B$ with Hilbert spaces of state vectors $\H_A$ and $\H_B$, and Hamiltonian operators $H_A$ and $H_B$ respectively, then the bipartite system with Hilbert space $\H = \H_A \hat{\otimes} \H_B$ and total Hamiltonian $H = H_A\otimes\mathbb{I}+\mathbb{I}\otimes H_B$ induces a unitary flow 
\begin{align*}
U_t=e^{itH} &= e^{it(H_A\otimes \mathbb{I}+\mathbb{I}\otimes H_B)}=  e^{it(H_A\otimes \mathbb{I})} e^{it(\mathbb{I}\otimes H_B)} \\ &= e^{itH_A}\otimes e^{itH_B}=U_t^A\otimes U_t ^B\,,
\end{align*}
where $U^A_t$, $U^B_t$ denote the individual unitary flows of the subsystems $A$ and $B$. Then we may call a one-parameter family of unitary operators $U_t$ on $\H=\H_A\hat{\otimes}\H_B$ separable if there exist two one-parameter families of unitary operators $U_t^A$ and $U_t^B$ on $\H_A$ and $\H_B$ respectively such that
	\begin{equation}
		U_t=U_t^A\otimes U_t ^B\;.
	\end{equation}
Notice that $U_t$ is separable if and only if $U_t\Phi$ is separable for any separable state $\Phi=\Phi_A\otimes\Phi_B$ for any $t$.  Even more, it is immediate to check that separable dynamics do not change the Schmidt index of a given state in $\H_A\hat{\otimes}\H_B$. 

Now, if we are given a system $H$ on $\H_A\otimes\H_B$ which is obtained by means of a self-adjoint extension of the product of a symmetric operator on $\H_A$ and a self-adjoint operator on $\H_B$, can we determine when are we going to obtain separable dynamics?  In other words, if $U\in\mathcal{U}(\L^2(\partial\Omega_A)\cotimes\H_B)$ is the unitary operator defining the self-adjoint extension,cf. eq. \eqref{asorey bipartite},  under what conditions will it characterize separable separable dynamics? 

We will solve first the spectral problem for the self-adjoint extension of $H$ defined by the boundary condition $U = U_A \otimes \mathbb{I}$.    We will assume for simplicity in what follows that the spectrum of $H_B$ is discrete and non-degenerate.  We denote the eigenvalues and eigenvectors of $H_B$ by $\lambda_k^B$ and $\rho_k^B$ respectively, $H_B \rho_k^B = \lambda_k^B \rho_k^B$, $k = 1,2,\ldots$.  An arbitrary function $\Phi \in \L^2(\Omega_A; \H_B)$ can be written uniquely as
$$ \Phi = \sum_{k =1}^\infty  \Phi_k^A \otimes \rho_k^B ,$$
hence 
$$\varphi = \sum_{k=1}^\infty \varphi_k^A \otimes \rho_k^B, \quad \dot{\varphi} = \sum_{k=1}^\infty \dot{\varphi}_k^A \otimes \rho_k^B ,$$
with $\varphi_k^A = \Phi_k^A\mid_{\partial \Omega_A}$ and $\dot{\varphi}_k^A = d\Phi_k^A/d\nu \mid_{\partial \Omega_A}$, $k = 1,2,\ldots$.  If 
$U = \sum_{s= 1}^N U_s^A \otimes U_s^B$ is a unitary operator acting on $\L^2(\partial \Omega_A)\hat{\otimes} \H_B$, we have $ U\varphi = \sum_{k=1}^\infty\sum_{s=1}^N U_s^A\varphi_k \otimes U_s^B \rho_k^B$ and so on.  In the particular instance of $U = U_A\otimes \mathbb{I}$ we get
$$ U\varphi = \sum_{k=1}^\infty U_A\varphi_k^A \otimes \rho_k^B $$
and similarly for $\dot{\varphi}$.  If we denote by $H_U$ the self-adjoint extension defined by $U$, the spectral problem $H_U \Phi = E \Phi$ becomes, after some trivial computations, the family of spectral problems:
\begin{equation}\label{spectralk}
H_A^\dagger \Phi_k^A + \lambda_k^B \Phi_k^A = E \Phi_k^A, \quad k = 1,2, \dots
\end{equation}
and the boundary conditions defined by $U$ become the family of boundary conditions:
$$
 \varphi_k^A - i\dot{\varphi}_k^A = U_A(\varphi_k^A + i\dot{\varphi}_k^{A}), \quad k = 1,2, \ldots .
$$
Thus, for each $k$ we have to solve the problem:

\begin{equation}\label{spectralsep}
		\begin{cases}
			&H_A^\dagger \Phi^A = ( E- \lambda_k^B) \Phi^A \\
			&\varphi^A - i\dot{\varphi}^A =  U_A(\varphi^A + i\dot{\varphi}^A) .
		\end{cases}
	\end{equation}
Notice that if we denote by $\Psi_l^A$ the eigenfunctions of the self-adjoint extension of the operator $H_A$ defined by $U_A$, this is, with the boundary conditions given in eq. \eqref{spectralsep}, we will have:
$$ H_A^\dagger \Psi_l ^A = \lambda_l^A \Psi_l^A, \quad \psi_l^A - i\dot{\psi}_l^A =  U_A(\psi_l^A + i\dot{\psi}_l^A), \quad l = 1,2, \ldots $$

We will also assume, for simplicity, that the spectrum of the extension of $H_A$ defined by $U_A$ is discrete (this supposes no loss of generality for our purposes).  In particular in dimension 1 the spectrum of any self-adjoint extension of the Laplace operator is discrete \cite{We80}. In general using the spectral theorem \cite{Ak60}, one can adapt this construction to the general case.   

We will denote in what follows by $H_{U_A}$ the self-adjoint extensions of $H_A$ determined by the unitary $U_A$\,.
We finally conclude that the spectrum of $H_U$ is given by
$$ E = \lambda_l^A + \lambda_k^B, \qquad k,l = 1,2, \ldots , $$
with eigenvectors $\Psi_l^A \otimes \rho_k^B$.  Now, if $\Phi \in \L^2 (\Omega_A; \H_B)$, we have that
$$ \Phi = \sum_{k,l} c_{lk} \Psi_l^A \otimes \rho_k^B$$
with $c_{lk} = \langle \Psi_l^A \otimes \rho_k^B, \Phi \rangle$, and if $\Phi$ is separable, $\Phi = \Phi_A \otimes \Phi_B$, we obtain:
$$c_{lk} = \langle \Psi_l^A, \Phi_A \rangle \langle \rho_k^B, \Phi_B \rangle = a_l b_k ,$$
with $\Phi_A = \sum_l a_l \Psi_l^A$ and $\Phi_B = \sum_k b_k \rho_k^B$ respectively.   Consequently,
\begin{eqnarray*}
 e^{itH_U}\Phi_A \otimes \Phi_B &=& \sum_{k,l} a_lb_k e^{itH_U}(\Psi_l^A \otimes \rho_k^B)\\ 
 &=& \sum_{k,l} a_lb_k e^{i(\lambda_l^A + \lambda_k^B)}(\Psi_l^A \otimes \rho_k^B) \\ &=& \left(e^{itH_{U_A}}\otimes e^{itH_B}\right)(\Phi_A \otimes \Phi_B)
 \end{eqnarray*}
 which shows that the self-adjoint extension defined by the unitary matrix $U = U_A \otimes \mathbb{I}$ is separable as it was easy to presume.

Let us discuss now boundary conditions of the simple form:
\begin{equation}\label{UU}
 U = U_A \otimes U_B ,
 \end{equation}
 with $U_A \in \mathcal{U}(\L^2(\partial \Omega_A))$ and $U_B \in \mathcal{U}(\H_B)$,
i.e., decomposable elements in the unitary group $\mathcal{U}(\L^2(\partial \Omega_A)\hat{\otimes}\H_B)$. 
We may even consider for simplicity that the unitary $U_B$ defines a symmetry of the quantum system $H_B$, this is $[H_B, U_B] = 0$.   In this case and in contrary to a simple guess, the dynamics defined by $U$ of the form in eq. \eqref{UU} is non-separable if $U_B \neq \mathbb{I}$\,.

The proof of this fact is as follows.   Because $U_B$ is a unitary operator, it can be diagonalized and the Hilbert space $\H_B$ decomposed as $\H_B = \bigoplus_{s= 1}^\infty W_s$, with $W_s$ orthogonal $U_B$-invariant subspaces such that:
$$ U_B \Phi_s^B = e^{i\nu_s} \Phi_s^B, \qquad \forall \Phi_s^B \in W_s .$$

Now, because $U_B$ commutes with $H_B$, $H_B$ will leave the subspaces $W_s$ invariant too, and we will denote by $H_{B,s}$ the restriction of $H_B$ to $W_s$, $s= 1, 2, \ldots$.  Moreover, we have that
$$ \H = \L^2 (\Omega_A;\H_B) = \L^2 (\Omega_A;\bigoplus_{s= 1}^\infty W_s) = \bigoplus_{s= 1}^\infty \L^2 (\Omega_A;W_s) .$$
The operator $H_A^\dagger$ leaves invariant the subspaces $\L^2 (\Omega_A;W_s)$ for all $s$.  Hence, the spectral problem $H_U\Phi = E\Phi$ with boundary conditions defined by eq. \eqref{UU} is equivalent to the solution of the family of spectral problems
\begin{equation}\label{spectralsubs}
		\begin{cases}
			& (H_A^\dagger\otimes \mathbb{I} + \mathbb{I}\otimes H_B) \Phi_s = E_s \Phi_s \\
			&\varphi_s - i\dot{\varphi}_s =  (U_A\otimes e^{i\nu_s})(\varphi_s + i\dot{\varphi}_s), \qquad s = 1,2,\ldots ,
		\end{cases}
	\end{equation}
because $\Phi_s \in \L^2(\Omega_A; W_s) \equiv \L^2(\Omega_A)\hat{\otimes} W_s$; $\varphi_s, \dot{\varphi}_s \in \L^2(\partial\Omega_A)\hat{\otimes}\H_B$, and $(U_A\otimes U_B) (\varphi_s+i\dot{\varphi}_s) = (U_A\otimes e^{i\nu_s})(\varphi_s+i\dot{\varphi}_s)$.  But the boundary conditions in eq. \eqref{spectralsubs} are the same as
$$ \varphi_s - i\dot{\varphi}_s =  \widetilde{U}_{A,s}(\varphi_s + i\dot{\varphi}_s), \qquad s = 1,2, \ldots $$
with $\widetilde{U}_{A,s} = e^{i\nu_s}U_A$, and in consequence the self-adjoint extension defined in $\L^2(\Omega_A;W_s)$ of the Hamiltonian $H_s = H_A\otimes \mathbb{I} + \mathbb{I} \otimes H_{B,s}$ by the boundary conditions $\widetilde{U}_{A,s}\otimes \mathbb{I}$ is separable because of the preceding arguments.

In consequence we have obtained that:
$$ e^{itH_s} = e^{itH_{\tilde{U}_{A,s}}} \otimes e^{it H_{B,s}}  .$$
Finally, if $\Phi = \Phi_A \otimes \Phi_B$ is a separable state in $\H$, we can write it as:
$$ \Phi = \sum_{s= 1}^\infty \sum_{i =1}^{N_s} \Phi_{s,i}^A \otimes \Phi_{s,i}^B $$
with $\Phi_{s.i}^B \in W_s$.  Then we have that
\begin{align}\label{dynamicss}
e^{itH_U} \Phi =& \sum_{s= 1}^\infty \sum_{i =1}^{N_s} e^{itH_U}  \Phi_{s,i}^A \otimes \Phi_{s,i}^B \\=& \sum_{s= 1}^\infty \sum_{i =1}^{N_s} e^{itH_{\tilde{U}_{A,s}}}  \Phi_{s,i}^A \otimes e^{itH_{B,s}} \Phi_{s,i}^B \;.\notag
\end{align}
But now the factor $e^{i\nu_s}$ is different for each $s$, hence the eigenvalues and eigenvectors of the spectral problem in $\L^2(\Omega_A; W_s)$, $s = 1,2, \ldots$
 \begin{equation}\label{spectrals}
		\begin{cases}
			& (H_A^\dagger\otimes \mathbb{I} + \mathbb{I}\otimes H_{B,s}) \Phi_s = E_s \Phi_s \\
			&\varphi_s - i\dot{\varphi}_s =  e^{i\nu_s} U_A (\varphi_s + i\dot{\varphi}_s), \qquad s = 1,2,\ldots ,
		\end{cases}
\end{equation}
are different for each $s$. Therefore the extension $H_{\tilde{U}_{A,s}}$ is different for each $s$ and we cannot factorize it out of the sum in the last term on the r.h.s. of eq. \eqref{dynamicss}. Thus we conclude that if $U_B \neq \mathbb{I}$ the dynamics $H_U$ is non-separable. Notice that the case $\nu_s=\nu$, $s=1,2\dots$ is equivalent to $U=e^{i\nu}U_A\otimes \mathbb{I}$\,.

We can prove the following theorem.

\begin{theorem}\label{separab}  Let $H_A$ be a densely defined symmetric operator on $\mathcal{L}^2(\Omega)$ and $H_B$ a self-adjoint operator on the Hilbert space $\mathcal{H}_B$.  Let $U_A$ be a unitary operator on $\mathcal{L}^2(\partial \Omega)$ such that  self-adjoint extensions of the operator $H_A$ defined by $e^{i\alpha}U$ have discrete spectrum for all $0\leq \alpha\leq 2\pi$.Let $H_B$ have discrete spectrum.  Then, the dynamics $H_U$ on the product Hilbert space $\L^2(\Omega_A)\cotimes\H_B$ defined by the unitary operator $U = U_A \otimes U_B\in\mathcal{U}(\L^2(\partial\Omega_A)\cotimes\H_B)$ is separable iff $U_B = \mathbb{I}$.
\end{theorem}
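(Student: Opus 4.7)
The forward direction is already established in the calculation preceding the theorem: for $U = U_A \otimes \mathbb{I}$ one obtains $e^{itH_U} = e^{itH_{U_A}} \otimes e^{itH_B}$, which is separable by definition.

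For the reverse direction I would argue by contrapositive, formalising the sketch given just before the theorem statement. Assume $U_B \neq \mathbb{I}$. Diagonalise $U_B$ to write $\H_B = \bigoplus_s W_s$ with phases $e^{i\nu_s}$; after absorbing a global phase into $U_A$ if needed, the non-identity hypothesis yields at least two distinct phases $\nu_{s_1} \neq \nu_{s_2}$. Under the additional assumption $[H_B,U_B]=0$, which is the situation of interest in the applications and to which the general case reduces by choosing test states inside a single $H_B$-eigenspace, each $W_s$ is $H_B$-invariant, the spectral problem splits into sectors as in eq.\ \eqref{spectrals}, and the sectorial flow is $e^{itH_{\tilde{U}_{A,s}}} \otimes e^{itH_{B,s}}$ with $\tilde{U}_{A,s} = e^{i\nu_s}U_A$.

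Testing separability on the state $\Phi_A \otimes (\chi_1 + \chi_2)$ with unit vectors $\chi_j \in W_{s_j}$, separability at time $t$ would force $e^{itH_{\tilde{U}_{A,s_1}}}\Phi_A$ and $e^{itH_{\tilde{U}_{A,s_2}}}\Phi_A$ to be proportional for every $\Phi_A$. By Stone's theorem this would require $H_{\tilde{U}_{A,s_1}} - H_{\tilde{U}_{A,s_2}}$ to be a scalar operator, contradicting the fact that $\tilde{U}_{A,s_1}$ and $\tilde{U}_{A,s_2}$ parametrise genuinely distinct self-adjoint extensions via eq.\ \eqref{asorey}.

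The main obstacle lies precisely in this last contradiction: one must show that rotating the boundary unitary by a non-trivial phase cannot merely shift the resulting self-adjoint extension by a constant. This is where the hypothesis that $e^{i\alpha}U_A$ generates a discrete-spectrum extension for every $\alpha \in [0,2\pi]$ becomes essential, allowing a direct eigenvalue-by-eigenvalue comparison through the Cayley description \eqref{asorey} and revealing that the spectrum depends on $\alpha$ in a non-affine way. Once this rigidity is in hand the contradiction is complete and the theorem follows.
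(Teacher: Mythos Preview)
Your forward direction matches the paper. For the reverse direction, however, your argument has a structural gap at the point where you impose $[H_B,U_B]=0$. The reduction you propose---``choosing test states inside a single $H_B$-eigenspace''---does not work as stated: if $U_B$ fails to commute with $H_B$, the boundary unitary $U_A\otimes U_B$ does not preserve the sector $\L^2(\partial\Omega_A)\otimes E_\lambda$, so the domain of $H_U$ does not split along $H_B$-eigenspaces and the sectorial analysis of eq.~\eqref{spectrals} is unavailable. You therefore cannot simply restrict attention to a single $H_B$-eigenspace and run the commuting-case argument there.

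The paper circumvents this in a different way. It argues by direct implication rather than contrapositive: assuming separability, one writes $H_U=\tilde{H}_A\otimes\mathbb{I}+\mathbb{I}\otimes\tilde{H}_B$ for some self-adjoint $\tilde{H}_A$, $\tilde{H}_B$ (which need not be $H_{U_A}$ or $H_B$). The one-parameter group $V_t=e^{it\tilde{H}_B}$ is then a symmetry of $H_U$ acting unitarily on the boundary space, and an external result on quantum symmetries \cite{Ib14} forces $[U,\,\mathbb{I}\otimes\tilde{H}_B]=0$, hence $[U_B,\tilde{H}_B]=0$. Only now can the preceding sectorial argument be invoked---with $\tilde{H}_B$ in place of $H_B$---to conclude $U_B=\mathbb{I}$. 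The commutation relation is thus \emph{derived} from separability, not assumed, and it involves the generator $\tilde{H}_B$ produced by the factorisation rather than the original $H_B$. This is the key step your proposal is missing. Your closing remarks about the ``last contradiction'' (that distinct phases cannot yield extensions differing by a constant) identify a genuine technical point, and indeed the paper's own proof is brief there as well; but the decisive missing ingredient in your argument is the mechanism for obtaining the commutation relation.
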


\begin{proof}  Let us assume that the dynamics defined by $H_U$ is separable. Then:
$$e^{itH_U} = e^{it\tilde{H}_A} \otimes e^{it\tilde{H}_B} ,$$
where, in principle, neither $\tilde{H}_A$ nor $\tilde{H}_B$ have to agree with $H_{U_A}$ nor $H_B$ respectively.  However we have that
$$H_U = \tilde{H}_A\otimes \mathbb{I} + \mathbb{I} \otimes \tilde{H}_B$$
with $\tilde{H}_A$ and $\tilde{H}_B$ self-adjoint operators.   It is also clear that the one-parameter group of unitary operators
$$ V_t = e^{it\tilde{H}_B}$$
defines a symmetry group of $H_U$,
$$ [H_U, V_t] = 0, \qquad \forall t .$$
Moreover the group $V_t$ acts unitarily on the boundary space $\L^2(\partial \Omega_A;\H_B)$.  It can be showed, cf. \cite{Ib14}, that then necessarily
$$ [U, I\otimes \tilde{H}_B ] = 0 \,.$$
Hence, $[U_B, \tilde{H}_B ] = 0$.   But now we have a self-adjoint extension defined by a unitary matrix of the form $U_A\otimes U_B$ with $[U_B, \tilde{H}_B ] = 0$ as in the discussion preceding this theorem. Then, repeating the previous arguments we will obtain that the dynamics is non-separable unless $U_B = \mathbb{I}$.
\end{proof}

%%%%%%%%%%%%%%%%%%%%%%%%%%%%%%%%%%%%%%%%%%%%%%%%%%%%%%%%%%%%%%%%%%%%%%%%%%%%%%%%%%%%%%%%%%%%%%%%%%%%%%%%%%%%%%%%%%%%%%%%%%%%%%%%%%%%%%%%%%%%%%%%%%%%%%%%%%%%%%%%%%%%%%%%%%%%%%%%%%%%%%%%%%%%%%%%%%%%%%%%%%%%%%%%%%%%%%%%%%%%%%%%%%%%%%%%%%%%%%%%%%%%%%%%

\section{A simple example: the half-line/half-spin bipartite system}\label{simple}

We will discuss now what is conceivably the simplest non-trivial example of a bipartite system of the kind considered in section \ref{bipartite_extensions}. Let the auxiliary system $A$ be a free particle moving on the half-line $\mathbb{R}^+$ ($\Omega_A = \mathbb{R}^+$, $\partial \Omega_A = \{ 0 \}$). That is the Hilbert space of the system is $\H_A = \L^2(\mathbb{R}^+,dx)$ and the dynamics of that system is governed by the free Hamiltonian $-\frac{1}{2}\frac{\d^2}{\d x^2}$. The bulk system $B$ will be a 2-level system, for instance a spin $1/2$ system whose Hilbert space is $\mathbb{C}^2$. The dynamics is given by an arbitrary $2\times2$ Hermitean matrix $H_B$. We assume that $\sigma( H_B)=\{\lambda_1>\lambda_2\}$ with eigenvectors $\rho_1,\rho_2$ respectively. The corresponding bipartite system $A\times B$ is defined in the Hilbert space $\H = \H_1\cotimes\H_2= \L^2(\mathbb{R}^+)\cotimes \mathbb{C}^2\simeq  \L^2(\mathbb{R}^+; \mathbb{C}^2)$ whose state vectors $\Phi\in\H$ will be written as
	\begin{equation}
		\Phi=\Phi_1\otimes\rho_1+\Phi_2\otimes\rho_2\simeq
		\begin{bmatrix}
			\Phi_1(x) \\ \Phi_2(x)
		\end{bmatrix}
		,\quad \Phi_a(x) \in \L^2(\mathbb{R}^+), \;a=1,2
	\end{equation}
where we have used the orthonormal basis $\{\rho_1,\rho_2\}$ to write the component vectors.

As we showed before, see Theorem \ref{deficiency}, the deficiency spaces are easy to compute and we get: $\N_{\pm}=\N_{A,\pm}\otimes\mathbb{C}^2\simeq\mathbb{C}^2$, because, as it is easy to check, $\dim \N_{A,\pm} = 1$ and therefore $\N_{A,\pm} = \mathbb{C}$. However, we work directly with boundary values which will prove to be more efficient.  Thus, given $\Phi\in\H$, the boundary values of $\Phi$ will live in $\L^2(\partial\mathbb{R}^+)\otimes\mathbb{C}^2$, in fact:
	\begin{equation*}
		\varphi:=\Phi\bigr|_{\partial\Omega_A}=\Phi_1\bigr|_{\partial\Omega_A}\otimes\rho_1+\Phi_2\bigr|_{\partial\Omega_A}\otimes\rho_2=
		\begin{bmatrix}
			\Phi_1(0) \\ \Phi_2(0)
		\end{bmatrix}=:
		\begin{bmatrix}
			\varphi_1 \\ \varphi_2
		\end{bmatrix}
	\end{equation*}
and similarly
	\begin{align*}
		\dot{\varphi}:=-\frac{\partial\Phi}{\partial x}\bigr|_{\partial\Omega_A}&=\dot{\varphi}_1\otimes\rho_1+\dot{\varphi}_2\otimes\rho_2=
		\begin{bmatrix}
			\dot{\varphi}_1 \\ \dot{\varphi}_2
		\end{bmatrix};\\
		\quad\dot{\varphi}_a&= - \frac{\partial\Phi_a}{\partial x}\Bigr|_{x = 0},\; \qquad a=1,2\;.
	\end{align*}
Finally, we combine the boundary data as:
	\begin{equation*}
		\varphi_{\pm}=\varphi \pm i\dot{\varphi}=
		\begin{bmatrix}
			\varphi_1 \pm i\dot{\varphi}_1\\ \varphi_2 \pm i\dot{\varphi}_2
		\end{bmatrix}
	\end{equation*}
and the self-adjoint extensions of $H =-\frac{\d^2}{\d x^2}\otimes\mathbb{I}+\mathbb{I}\otimes H_B $ are characterized by unitary operators $U\in\mathcal{U}( \L^2(\partial\Omega_A)\otimes\mathbb{C}^2)\simeq  U(2)$ defining the domains $\varphi_-= U \varphi_+$.

Notice that in matrix form the operator $H$ has the form
	\begin{equation}\label{H_half_line}
		H = -\frac{\d^2}{\d x^2}\otimes\mathbb{I}+\mathbb{I}\otimes H_B=
		\begin{bmatrix}
			-\frac{\d^2}{\d x^2}+\lambda_1 & 0 \\ 0 & -\frac{\d^2}{\d x^2}+\lambda_2
		\end{bmatrix}.
	\end{equation}

We recall now that the boundary data space is given by $\L^2(\partial\mathbb{R}^+)\otimes\mathbb{C}^2$. Hence according with Theorem \ref{separab}, separable dynamics will be given by unitary operators of the form $U=U_A\otimes\mathbb{I}$\,, where $U_A\colon \L^2(\partial\mathbb{R}^+)\to \L^2(\partial\mathbb{R}^+)$ and hence $U_A=e^{i\alpha}$ is just multiplication by a phase.  Incidentally we may recall that these are all the self-adjoint extensions of the system $A$ in the half-line and they correspond to boundary conditions of the form
	\begin{equation}
		\varphi_A-i\dot{\varphi_A}=e^{i\alpha}(\varphi_A+i\dot{\varphi}_A)
	\end{equation}
or equivalently
	\begin{equation}\label{local }
		\begin{cases}
			\dot{\varphi}_A=-\tan(\alpha/2)\varphi_A, & \alpha\neq\pi\\
			\varphi_A=0 , & \alpha=\pi;.
		\end{cases}
	\end{equation}
Now, because the space of self-adjoint extensions for the bipartite system is actually $U(2)$\,, as it was shown above, there are many self-adjoint extensions that will define non-separable dynamics.   Notice that because the spectrum of the Laplace operator in the half-line is not discrete, we cannot apply Theorem \ref{separab}. However, we will proceed by a direct computation of the ground state of the composite system under different self-adjoint extensions.

We will consider the particular instance of self-adjoint extensions defined by unitary matrices of the form
	\begin{equation}
		U=U_A\otimes V\;,\quad U_A \in\mathcal{U}(\L^2(\partial\mathbb{R}^+))\;,V\neq \mathbb{I}\in U(2)\;.
	\end{equation}
Despite of their form they determine non-separable dynamical evolution. In fact, among this class and because $U_A$ is just multiplication by a complex number of modulus $1$, we can just consider as the simplest, non-trivial example a matrix $V$ of the form
	\begin{equation}\label{non-separable condition}
		V=
		\begin{bmatrix}
			e^{i\alpha_1} & 0 \\ 0 & e^{i\alpha_2}
		\end{bmatrix}\;,\quad\text{with}\quad e^{i\alpha_1}\neq e^{i\alpha_2}\;,
	\end{equation}
i.e., a matrix $V$ belonging to a maximal torus inside $U(2)$.  It is also noticeable that such $V$ is the most general matrix commuting with $H_B$. Notice that if $\varphi=\varphi_1\otimes\rho_1+\varphi_2\otimes\rho_2\in \L^2(\partial\mathbb{R}^+)\otimes\H_B$, then
	\begin{equation}
		(\mathbb{I}\otimes V)\varphi =\varphi_1\otimes V\rho_1 + \rho_2\otimes V\rho_2 
			=	\begin{bmatrix}
					V_{11} \varphi_1+V_{12}\varphi_2 \\ V_{21}\varphi_1 + V_{22}\varphi_2
				\end{bmatrix}=V\cdot
				\begin{bmatrix}
					\varphi_1 \\ \varphi_2
				\end{bmatrix}
			=V\cdot\varphi\;.
	\end{equation}

To compute the point spectrum of the self-adjoint operator $H_U$ defined by the unitary operator $U = \mathbb{I}\otimes V$ is easy.  Notice that 
eq. \eqref{asorey bipartite} becomes now:
	\begin{equation}\label{diagonal condition}
		\begin{bmatrix}
			\varphi_{1} - i \dot{\varphi}_1 \\ \varphi_2 - i \dot{\varphi}_2
		\end{bmatrix} =
		\begin{bmatrix}
			e^{i\alpha_1} & 0 \\ 0 & e^{i\alpha_2}
		\end{bmatrix}
		\begin{bmatrix}
			\varphi_{1} + i \dot{\varphi}_1 \\ \varphi_2 + i \dot{\varphi}_2
		\end{bmatrix}\;,
	\end{equation}
this is, $\varphi_{a-}=e^{i\alpha_a}\varphi_{a+}$, $a=1,2$ or, if both $\alpha_1,\alpha_2\neq\pi$,
	\begin{equation}\label{local non-separable}
		\dot{\varphi}_a=-\tan(\alpha_a/2)\varphi_a\;,\quad a=1,2\;.
	\end{equation}
Then, the eigenvalue problem $H_U\Phi=E\Phi$ becomes
	\begin{align*}
		H_U\Phi	&=-\frac{\d^2}{\d x^2}\Phi_1\otimes\rho_1-\frac{\d^2}{\d x^2}\Phi_2\otimes\rho_2 +\Phi_1\otimes H_B\rho_1+\Phi_2\otimes H_B \rho_2\\
				&=(-\frac{\d^2}{\d x^2}+\lambda_1)\Phi_1\otimes\rho_1+(-\frac{\d^2}{\d x^2}+\lambda_2)\Phi_2\otimes\rho_2\;.
	\end{align*}
This eigenvalue problem is equivalent to
	\begin{equation}
		\begin{matrix}
			\displaystyle{-\frac{\d^2}{\d x^2}\Phi_a=(E-\lambda_a)\Phi_a} \\
			\displaystyle{\dot{\varphi}_a=-\tan(\alpha_a/2)\varphi_a }
		\end{matrix}\biggr\} \qquad a=1,2\;.
	\end{equation}
We may start solving:
\begin{equation}
	-\frac{\d^2\Phi_1}{\d x^2}=(E-\lambda_1)\Phi_1\;;\quad\dot{\varphi_1}=-\tan(\alpha_1/2)\varphi_1\;.
\end{equation}
We see immediately that if $\lambda_1 \leq E$, the solutions to this problem are not in $\L^2(\mathbb{R^+})$, thus $\lambda_1>E$ and the corresponding eigenfunction is $\Phi_1(x)=C_1 e^{-\sqrt{\lambda_1-E}\, x}$.  Moreover $\dot{\varphi_1} = -\frac{\d\Phi_1}{\d x}\bigr|_{x=0}=C_1\sqrt{\lambda_1-E}$, hence $\sqrt{\lambda_1-E}=\tan(\alpha_1/2)$ or
$$ E=\lambda_1-\tan^2(\alpha_1/2)\;.$$
Notice that $E$ is the unique discrete eigenvalue of the operator and that the rest of the spectrum is continuous.  

We can proceed similarly for the other component ($a=2$) finding again that $E=\lambda_2-\tan^2(\alpha_2/2)$ if $E < \lambda_2$.   In consequence, if $E < \lambda_2$ we obtain the compatibility condition (recall that $\lambda_1 > \lambda_2$)
	\begin{equation}\label{compatibility}
		\tan^2(\alpha_1/2)-\tan^2(\alpha_2/2)=\lambda_1-\lambda_2>0\;,
	\end{equation}
that must be satisfied for the existence of an eigenvector with eigenvalue $\lambda_1 > \lambda_2 > E$.    Figure \ref{phasespace} shows the space of self-adjoint extensions $(\alpha_1,\alpha_2)$ with non-degenerate ground state $E$ for various values of the spectral gap $\sigma := \lambda_1 - \lambda_2$ of the bulk system.

If $\lambda_2 \leq E < \lambda_1$, $E$ is an eigenvalue again, but this time the eigenvector is going to have only the $a = 1$ component.
We want to stress that the compatibility condition eq. \eqref{compatibility} is only necessary for the existence of a non-void point spectrum.  If it is not satisfied, then the problem has no point spectrum. Nevertheless, the Hamiltonian is self-adjoint even if there is no point spectrum.

%%%%%%%%%%%%%%%%%%%%%%%%%%%%%%%%%%%%%%%%%%%%%%%%%%%%%%%%%%%%%%%%%%%%

\begin{figure}[ht]
	\centering
	\includegraphics[height=7cm]{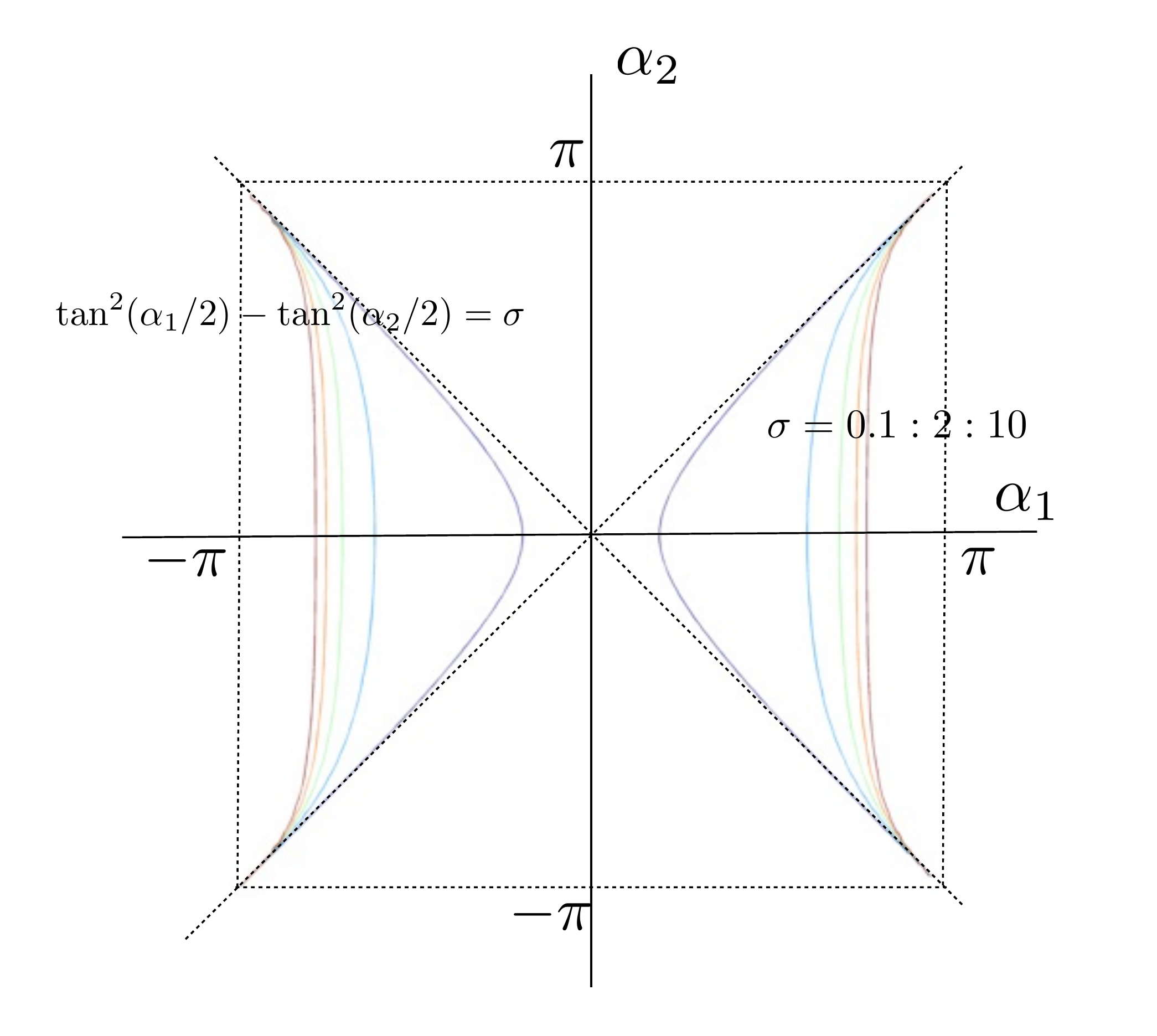}
	\caption{Phase space of self-adjoint extensions of the half-line/half-spin system as function of the spectral gap $\sigma$ possessing a unique ground state.}\label{phasespace}
\end{figure}

%%%%%%%%%%%%%%%%%%%%%%%%%%%%%%%%%%%%%%%%%%%%%%%%%%%%%%%%%%%%%%%%%%%%

The curves defined by eq. \eqref{compatibility} determined by the values of $\sigma$ provide families of non-separable, self-adjoint extensions of $H$ compatible with the structure of $H_B$.   Suppose now that we select as initial state the eigenstate corresponding to the extension defined by $\alpha_1=\arctan\sqrt{\sigma},\;\alpha_2=0$, this is $\Phi_0=e^{-\sqrt{\sigma}/2x}\o\rho_1$. 

Consider now the (time-dependent) Hamiltonian $H$ for the bipartite system given by eq. \eqref{H_half_line} and domain defined by the one-parameter family of self-adjoint extensions defined by the unitary matrices:
	\begin{equation}
		U_{s(t)} =
		\begin{bmatrix}
			e^{2is(t)} & 0 \\ 0 & e^{2is'(t)}
		\end{bmatrix}
	\end{equation}
with $s(t),s'(t)$ such that $\tan^2s-\tan^2s'=\sigma$, this is $s'=\arctan\sqrt{\tan^2s-\sigma}$.    That is, the time-dependence of the evolution of the system is not in the form of the infinitesimal generator but on its domain, which changes with time according with Eq. \eqref{asorey} because the unitary operator $U_{s(t)}$ that defines the domain depends on $t$.

Suppose that we proceed to modify the self-adjoint extension adiabatically.  For that we may choose the parametrization $s = s(t)$, with $t$ the physical time, in such a way that $0< ds/dt << 1$.  Then, in the adiabatic approximation, the eigenstate $\Phi_0$ will change with $t$ but it will remain close to the  ground state of the self-adjoint extension $H_{U_s}$, its (unique) eigenstate, and it will be given by:
	\begin{equation}\label{stationary state}
		\Phi_s=C_1e^{-(\tan s)\; x}\o\rho_1+C_2e^{-(\tan s')\; x}\o\rho_2\;,\quad 0<s<\pi/2\;.
	\end{equation}
Such state $\Phi_s$ is generically an entangled state in $\H_A\cotimes\H_B$.
Notice that the phase diagram of the self-adjoint extensions constructed in this way is periodic, see Figure \ref{periodic}, where black dots correspond to separable states of the form either $e^{-\sqrt{\xi}x}\o\rho_1$ or $e^{-\sqrt{\xi}x}\o\rho_2$.

%%%%%%%%%%%%%%%%%%%%%%%%%%%%%%%%%%%%%%%%%%%%%%%%%%%%%%%%%%%%%%%%%%%%

\begin{figure}[ht]
	\centering
	\includegraphics[height=6cm]{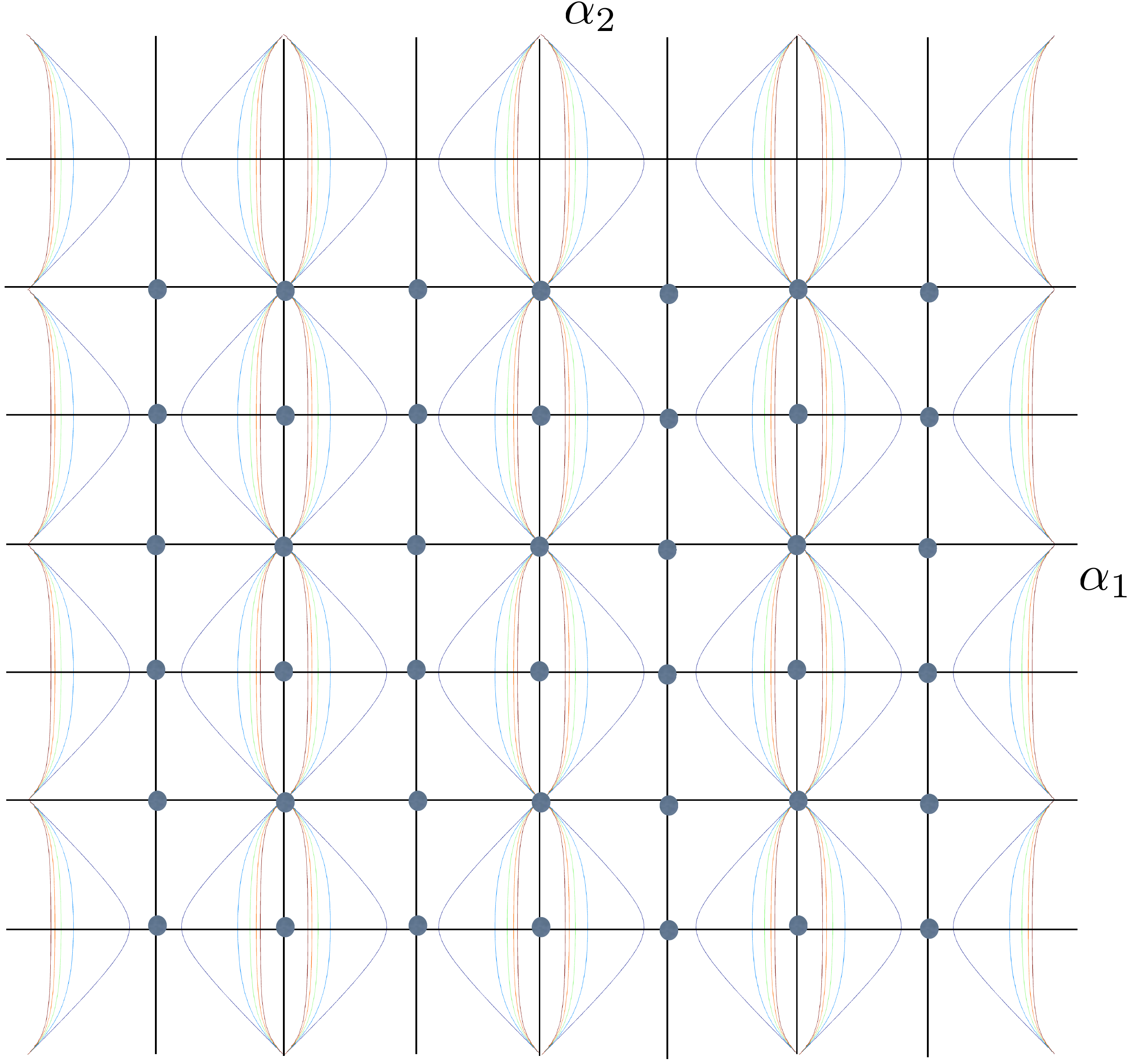}
	\caption{Curves of self-adjoint extensions in the Abelian torus $\mathbb{T}^2 \subset U(2)$ with a single point spectrum.}\label{periodic}
\end{figure}

%%%%%%%%%%%%%%%%%%%%%%%%%%%%%%%%%%%%%%%%%%%%%%%%%%%%%%%%%%%%%%%%%%%%

\subsection*{The half-line/multipartite spin 1/2 system}

We can elaborate the previous example again by considering a system $B$ that is already a composite system, i.e. $\H_B=\H_{B_1}\cotimes\H_{B_2}$ with $\operatorname{dim}\H_{B_\alpha}=n_\alpha$, $\alpha= 1,2$. The self-adjoint operators $H_{B_\alpha}$, $\alpha=1,2$ have eigenvalues $\lambda^{(\alpha)}_{k_\alpha}$, $k_\alpha=1,\dots,n_\alpha$ and a basis of eigenvectors of the operator $H_{B_1}\o\mathbb{I}+\mathbb{I}\o H_{B_2}$ given by
	\begin{equation}
		\rho_{k_1,k_2}=\rho^{(1)}_{k_1}\o\rho^{(2)}_{k_2}\;,
	\end{equation}
where $\rho^{(\alpha)}_{k_\alpha}$ are eigenvectors with eigenvalues $\lambda^{(\alpha)}_{k_\alpha}$. The eigenvalue corresponding to the eigenvector $\rho_{k_1,k_2}$ is just $\lambda^{(1)}_{k_1}+\lambda^{(2)}_{k_2}$. Now we compute the system $A\times B$ to get 
	\begin{equation}
		\H=\H_{A}\hat{\o}\H_{B}\simeq\L^2(\mathbb{R}^+;\H_{B_1}\cotimes\H_{B_2})
	\end{equation}
and we expand $\Phi\in\H$ as
	\begin{equation}
		\Phi=\sum_{1\leq k_{\alpha}\leq n_\alpha}\Phi_{k_1,k_2}\o\rho_{k_1,k_2}\;.
	\end{equation}
In the same way $\varphi=\sum_{1\leq k_{\alpha}\leq n_\alpha}\varphi_{k_1,k_2}\o\rho_{k_1,k_2}$ and $\dot{\varphi}=\sum_{1\leq k_{\alpha}\leq n_\alpha}\dot{\varphi}_{k_1,k_2}\o\rho_{k_1,k_2}$ with $\varphi_{k_1,k_2}=\Phi(0)_{k_1,k_2}$ and $\dot{\varphi}_{k_1,k_2}=-\frac{\d\Phi_{k_1,k_2}}{\d x}\bigr|_{x=0}$.
Finally, we notice that the space of self-adjoint extensions of the composite symmetric operator $H$ is given by $\mathcal{U}(\L^2(\partial\mathbb{R}^+\o\H_{B_1}\o\H_{B_2}))$, i.e.
	\begin{equation}
		\mathcal{M}_{AB} = U (n_1\cdot n_2) = U (N)\;,\quad N=n_1\cdot n_2,\; n_\alpha=\operatorname{dim}\H_{B_\alpha}\;, \alpha=1,2\;.
	\end{equation}
Notice that the assumptions of Theorem \ref{separab} do not hold in this case. However, the intuition provided by Theorem \ref{separab} makes us expect that separable dynamics will correspond to $U=U_A\times\mathbb{I}$, $U_A=e^{i\delta}$. Hence, let us choose boundary conditions leading to non-separable dynamics in the composite system $A\times B$ and in $B$ itself.

We consider for instance $U=\mathbb{I}\times V$ with $V\in \mathcal{U}(\H_{B_1}\o\H_{B_2})$. Again we choose a simplifying hypothesis and assume that the spectrum is non-degenerate. Consider the ordered spectrum of the Hamiltonian $H_B$, i.e. $\Lambda_1=\max\{\lambda^{(1)}_{k_1}+\lambda^{(2)}_{k_2}\}=\lambda^{(1)}_{s_1}+\lambda^{(2)}_{s_2}\geq\Lambda_2=\lambda^{(1)}_{r_1}+\lambda^{(2)}_{r_2}\geq\cdots\geq\Lambda_{N}=\min\{\lambda^{(1)}_{k_1}+\lambda^{(2)}_{k_2}\}$ and let $\Pi_1,\dots,\Pi_{N}$ be the corresponding eigenvectors. Then $H_B\Pi_l=\Lambda_l\Pi_l$.  We choose now the matrix $V$ to be diagonal in this basis, 
$$V=\operatorname{diag}(e^{i\alpha_1},e^{i\alpha_2},\dots,e^{i\alpha_N})$$ 
and repeating the computations performed in the previous example we will get that the point spectrum of the operator $H$ is given by $E=\Lambda_l-\tan^2\alpha_l,\quad l=1,\dots,N$ which imposes $N-1$ conditions on the parameters $\alpha_l$ of the form
	\begin{equation}\label{multi_conditions}
		\Lambda_l-\tan^2\alpha_l=\Lambda_{l+1}-\tan^2\alpha_{l+1}\;,\quad l=1,\dots,N-1\;.
	\end{equation}
The previous equations \eqref{multi_conditions} define a curve in the $N$-dimensional maximal compact abelian subgroup of $U (N)$ similar to those exhibited in Figure \ref{periodic}. Again, a similar analysis as in the example of a single spin 1/2 system allows to conclude that an adiabatic deformation of the system along this curve will take a separable state, for instance $\Phi_{11}\o\rho_{11}=\Phi_{11}\o\rho^{(1)}_1\o\rho^{(2)}_1$ into (maximally) non-separable states.

%%%%%%%%%%%%%%%%%%%%%%%%%%%%%%%%%%%%%%%%%%%%%%%%%%%%%%%%%%%%%%%%%%%%%%%%%%%%%%%%%%%%%%%%%%%%%%%%%%%%%%%%%%%%%%%%%%%%%%%%%%%%%

\section{The quantum planar rotor-spin system}\label{quantum_rotor}

We consider as final example the interesting case of an hybrid system that captures some properties of electron--nucleus systems described recently (see \cite{Sa12}). System $A$ will be now a particle moving in the interval $\Omega_A=[0,1]$ with measure $\d x$, i.e. $\H_A=\L^2([0,1],\d x).$ Unlike in the previous case, now the boundary of system $A$ has two points and therefore the self-adjoint extensions of system $A$ alone are going to be parametrized by matrices in $U(2)$.  All of them have a discrete spectrum (\cite{We80}), so that now we are going to be under the conditions of Theorem \ref{separab}.  Actually, we are going to consider a planar rotor with quasi-periodic boundary conditions \cite{As83}, i.e.,  the previous system with self-adjoint extensions determined by the unitary matrix
	\begin{equation}\label{quasiperiodic}
		U_{A,\delta}=
			\begin{bmatrix}
				0 & e^{i\delta} \\ e^{-i\delta} & 0
			\end{bmatrix}\in \mathcal{U}(\L^2(\partial[0,1]))\;,
	\end{equation}
that correspond to boundary conditions $\Phi(0)=e^{i\delta}\Phi(1)$ and $\Phi'(0)=e^{i\delta}\Phi'(1)$.
Now we will consider as bulk system $B$ a two-level system, for instance a spin $1/2$ system, with dynamics given by $H_B=\mu \sigma_z$, where $\sigma_z$ is the diagonal Pauli matrix 
$$\sigma_z=\begin{bmatrix} 1& 0\\ 0 &-1 \end{bmatrix}$$ 
and $\mu$ is a constant that accounts for both, the coupling constant of the magnetic field with the spin $1/2$ system and the strength of the magnetic field. Then, $\H=\L^2(S^1,\frac{\d x}{2\pi})\hat{\o}\mathbb{C}^2=\L^2(S^1;\mathbb{C}^2)$ is the state space of the total system and we consider
	\begin{equation}\label{planar rotor spin}
		H=-\frac{\d^2}{\d x^2}\o\mathbb{I}+\mathbb{I}\o H_{B}
	\end{equation}
as the total Hamiltonian. For this particular example we turn to the standard notation for spin systems and write the eigenstates corresponding to $H_B$ as  $|\negthickspace\uparrow\downarrow\rangle$. Therefore $H_B|\negthickspace\uparrow\downarrow\rangle=\pm\mu|\negthickspace\uparrow\downarrow\rangle$ and a particular element $\Phi$ of the composite system $\H=\H_A\cotimes\H_B$ will admit the decomposition $\Phi=\Phi^{\uparrow}\o|\negthickspace\uparrow\rangle+\Phi^{\downarrow}\o|\negthickspace\downarrow\rangle$.  
 As boundary conditions we choose $U\in\mathcal{U}(\mathbb{C}_A^2)\o\mathcal{U}(\mathbb{C}_B^2)$ of the form $U=U_{A,\delta}\o U_B$, with $U_{A,\delta}$ as in eq. \eqref{quasiperiodic}.
 
Physically, this system can be interpreted as follows (see Fig. \ref{Quantum Compass}).   There is a charged particle moving along a circle \cite{As83}. In the center of this orbit, there is a fixed spin that interacts with a magnetic field of strength $\mu$ perpendicular to the plane of the orbit. The component $U_B$ of the boundary condition shall be interpreted as a macroscopic interaction \emph{triggered} when the orbiting charged particle traverses an ideally infinitesimal region of the orbit.
\begin{figure}[ht]
	\centering
	\includegraphics[height=4cm]{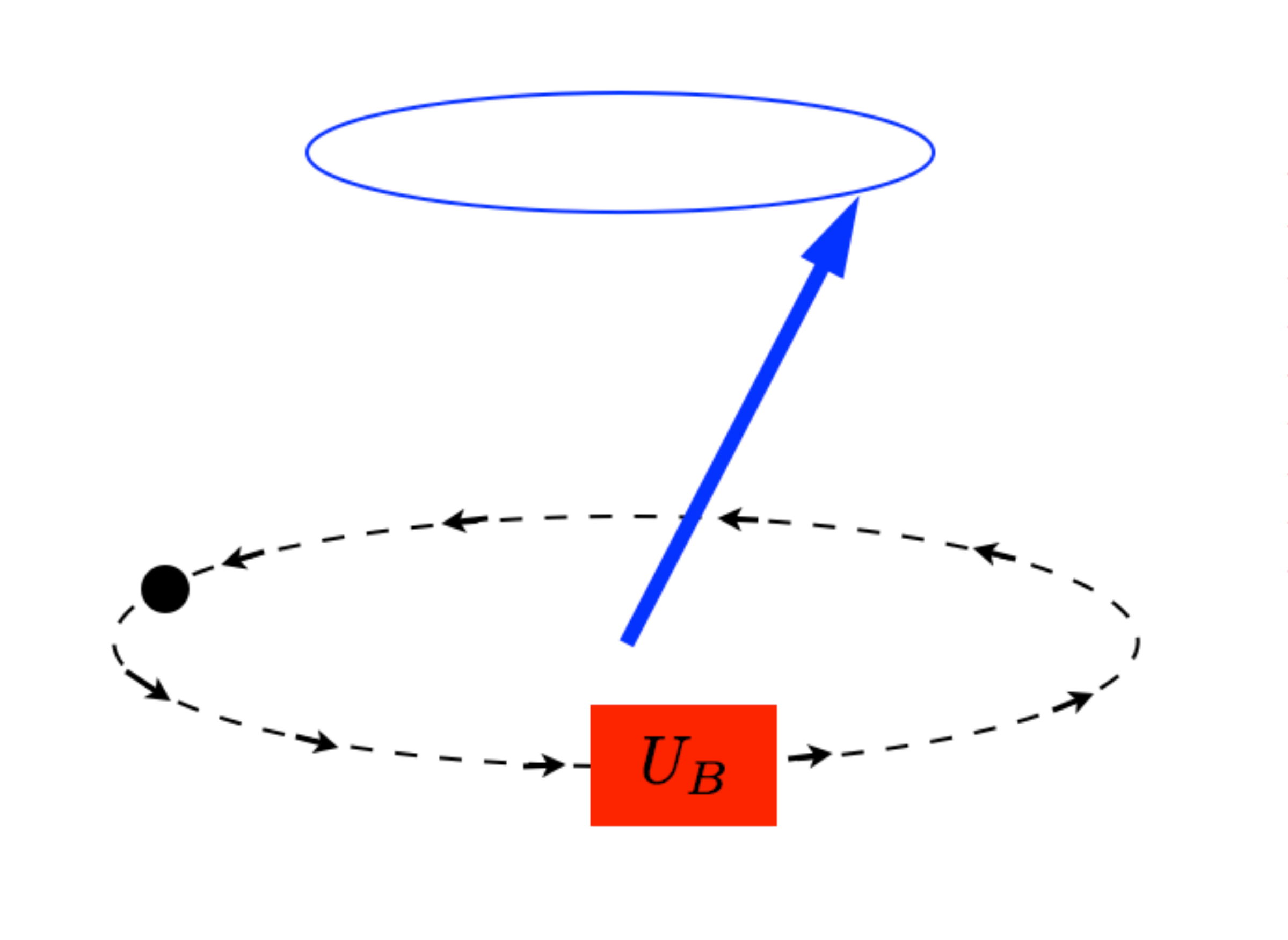}
	\caption{Quantum Compass}\label{Quantum Compass}
\end{figure}

We are going now to consider two different meaningful situations (compare with eq. \eqref{non-separable condition}) for the boundary conditions   corresponding to subsystem $B$. The first situation will correspond to select the unitary matrix $U_B$ diagonal in the basis of $H_B$, namely:  
	\begin{equation}
		U_B=
		\begin{bmatrix}
			e^{i\alpha} & 0 \\ 0 & e^{-i \alpha}
		\end{bmatrix}\;.
	\end{equation} 
 The boundary conditions defined by these unitary matrices take the explicit form
 	$$\begin{array}{c l}
		\Phi^{\up}(0)+i\Phi^{\up\prime}(0)=e^{i(\alpha+\delta)}\bigl(\Phi^{\up}(1)+i\Phi^{\up\prime}(1)\bigr)& \\
		\Phi^{\up}(1)-i\Phi^{\up\prime}(1)=e^{i(\alpha-\delta)}\bigl(\Phi^{\up}(0)-i\Phi^{\up\prime}(0)\bigr) & \\
		\Phi^{\down}(0)+i\Phi^{\down\prime}(0)=e^{i(-\alpha+\delta)}\bigl(\Phi^{\down}(1)+i\Phi^{\down\prime}(1)\bigr) & \\
		\Phi^{\down}(1)-i\Phi^{\down\prime}(1)=e^{i(-\alpha-\delta)}\bigl(\Phi^{\down}(0)-i\Phi^{\down\prime}(0)\bigr) & .
	\end{array}$$
One can proceed like in the previous examples and impose the above boundary conditions to the general solution of the spectral problem, eq. \eqref{planar rotor spin}, given by 
	\begin{align}
		\Phi^{\up}(x)&=Ae^{i\sqrt{E-\mu}x}+Be^{-i\sqrt{E-\mu}x}\notag\\
		\Phi^{\down}(x)&=Ce^{i\sqrt{E+\mu}x}+De^{-i\sqrt{E+\mu}x}\;,\notag	
	\end{align}
to find the corresponding spectral function associated to the problem. In this case one obtains the following spectral function
	\begin{align}
		\sigma_\alpha(E)=\Bigl[2i&\sin(\sqrt{E-\mu})+2iE\sin(\sqrt{E-\mu})  2i\mu\sin(\sqrt{E-\mu})-8\sqrt{E-\mu}\cos(\delta)e^{i\alpha} + \notag\\
		 & + 8\sqrt{E-\mu}\cos(\sqrt{E-\mu})\cos(\alpha)e^{i\alpha}-2i(E-\mu+1)\sin(E-\mu)e^{i2\alpha}\Bigr] \times   \notag\\
                    & \times   \Bigl[2i\sin(\sqrt{E+\mu}) +2iE\sin(\sqrt{E+\mu})+2i\mu\sin(\sqrt{E+\mu})- 8\sqrt{E+\mu}\cos(\delta)e^{-i\alpha} + \notag \\ 
                    & +8\sqrt{E-\mu}\cos(\sqrt{E-\mu})\cos(\alpha)e^{-i\alpha}-2i(E+\mu+1)\sin(E+\mu)e^{-i2\alpha}\Bigr]\;,\notag
	\end{align}
whose zeros are the corresponding eigenvalues.

Finding the zeros of this transcendental function has to be done numerically. However, this task can be challenging, especially because $\sigma_\alpha(E)$ is very close to vanish in some regions. Moreover, the information about the separability of the dynamical evolution depends on the eigenfunctions of the problem as shown in the previous sections. For all these reasons, in order to check that the above problem is not leading to separable dynamics, we will take the approach introduced in \cite{Ib11}.   There, an algorithm based on the Finite Element Method is introduced that is able to solve the spectral problem for any self-adjoint extension of a 1D Schr\"odinger problem. Then it is enough to use the isomorphism $\L^2([0,1])\otimes\mathbb{C}^2\simeq \L^2([0,1])\oplus\L^2([0,1])$ to rewrite the problem given by eq. \eqref{planar rotor spin} into a form that can be handled by this numerical procedure.  Figure \ref{Perdiag eigenfunctions} shows the eigenfunctions corresponding to the 6 smallest energies returned by the algorithm for $\mu=10$, $\delta=\pi/2$, $\alpha=\pi/2$. In each graph are represented simultaneously the two components, $\Phi^{\up}(x)$ and $\Phi^{\down}(x)$, of the eigenfunction $\Phi=\Phi^{\up}|\up\rangle + \Phi^{\down}|\down\rangle$\,. The particular values of the energies are not shown because they are not relevant for the discussion.
\begin{figure}[ht]
	\centering
	\includegraphics[height=6.5cm]{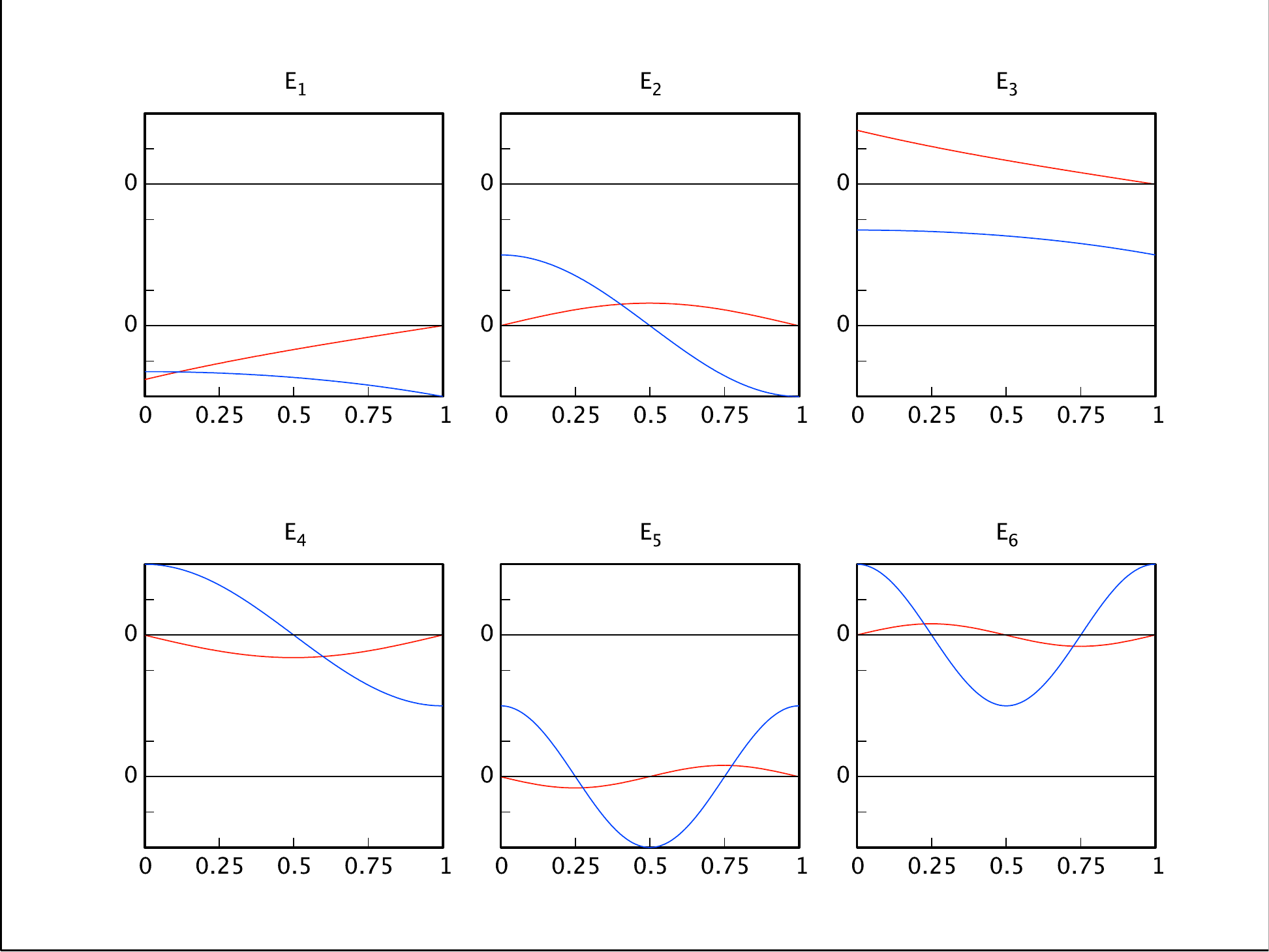}
	\caption{Eigenfunctions of the 6 lowest energy levels for the case $\mu=10$, $\delta=\pi/2$, $\alpha=\pi/2$. On each graph $\Phi^\up(x)$ and $\Phi^\down(x)$ are plotted simultaneously. Real parts are plotted in blue, imaginary parts in red.}\label{Perdiag eigenfunctions}
\end{figure}
As it can be appreciated, the eigenfunctions are separable states in this case. However, separability of the eigenfunctions is not enough to guarantee the separability of the dynamics. According to Section \ref{separable dynamics}, the eigenfunctions of the total Hamiltonian need to admit a factorization $\psi_l\otimes\rho_b$ in terms of the eigenfunctions $\{\psi_l\}$ and $\{\rho_b\}$ of the Hamiltonians of the parties $H_A$ and $H_B$ respectively. In other words, the indices $l$ and $b$ must be independent. As it can be appreciated comparing the eigenfunctions $\Phi_1$ and $\Phi_3$ corresponding to the eigenvalues $E_1$ and $E_3$ respectively, they are not of the form $\Phi_1=\psi(x)\otimes\rho_1$ and $\psi(x)\otimes\rho_2$ for some function $\psi\in\mathcal{L}^2([0,1])$ showing that the set $\{\psi_l\}$ is not independent of $\{\rho_b\}$\,. The same argument holds for the pairs $E_2$, $E_4$ and $E_5$, $E_6$. HEnce we conclude that we have non-separable dynamics for this particular choice of the boundary conditions.\\

Now we consider a different situation where the unitary matrix $U_B$ is taken anti-diagonal with respect to the given basis of $H_B$ and given by
	\begin{equation}
		U_B=
		\begin{bmatrix}
			0 & e^{i\beta}  \\  e^{-i \beta} & 0
		\end{bmatrix}\;.
	\end{equation} 
In this case the boundary conditions defining the system take the form:
 	$$\begin{array}{c l}
		\Phi^{\up}(0)+i\Phi^{\up\prime}(0)=e^{i(\beta+\delta)}\bigl(\Phi^{\down}(1)+i\Phi^{\down\prime}(1)\bigr)& \\
		\Phi^{\up}(1)-i\Phi^{\up\prime}(1)=e^{i(\beta-\delta)}\bigl(\Phi^{\down}(0)-i\Phi^{\down\prime}(0)\bigr) & \\
		\Phi^{\down}(0)+i\Phi^{\down\prime}(0)=e^{i(-\beta+\delta)}\bigl(\Phi^{\up}(1)+i\Phi^{\up\prime}(1)\bigr) & \\
		\Phi^{\down}(1)-i\Phi^{\down\prime}(1)=e^{i(-\beta-\delta)}\bigl(\Phi^{\up}(0)-i\Phi^{\up\prime}(0)\bigr) & .
	\end{array}$$
Again, one can compute the spectral function associated to this problem and we get:
	\begin{align}
		\sigma_\beta(E)&\propto\sqrt{E^2-\mu^2}\cos(\sqrt{E-\mu})\cos(\sqrt{E+\mu})- \notag\\ &-E\sin(\sqrt{E-\mu})\sin(\sqrt{E+\mu})-\sqrt{E-\mu}\sqrt{E+\mu}\cos(2\delta)\;.\notag
	\end{align}
Surprisingly, the spectral function does not depend on the parameter $\beta$ in this case, but the eigenfunctions do.  In Fig. \ref{PerPer eigenfunctions} are plotted the eigenfunctions corresponding to the case $\mu=10$, $\delta=\pi/2$, $\beta=\pi/2$.  One can appreciate that they are non-separable and therefore the dynamics characterized by this last set of boundary conditions is not separable.
\begin{figure}[h]
	\centering
	\includegraphics[height=6.5cm]{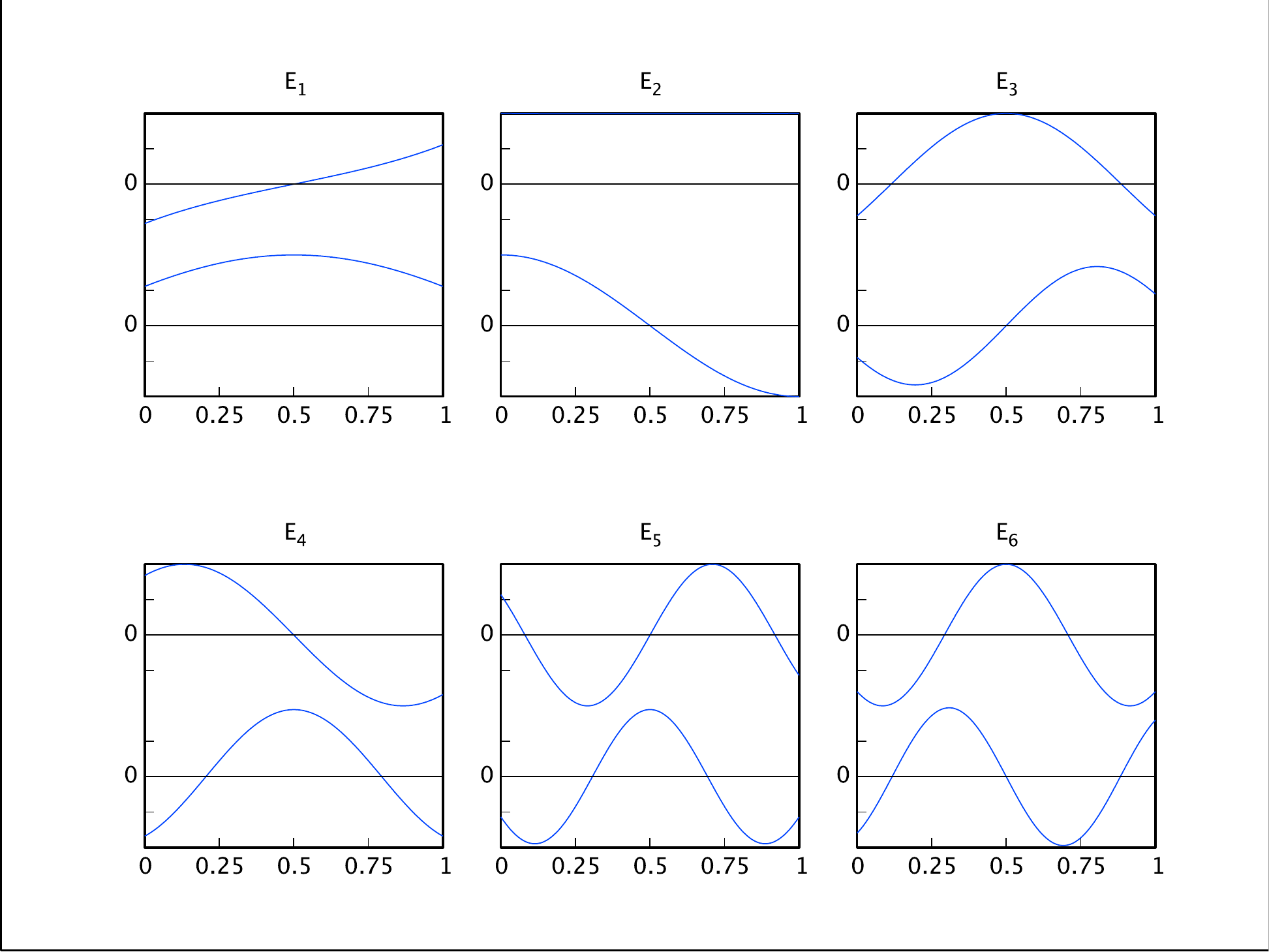}
	\caption{Eigenfunctions of the 6 lowest energy levels for the case $\mu=10$, $\delta=\pi/2$, $\beta=\pi/2$.  On each graph $\Phi^\up(x)$ and $\Phi^\down(x)$ are plotted simultaneously. The imaginary parts vanish identically in this case.}\label{PerPer eigenfunctions}
\end{figure}
%

%%%%%%%%%%%%%%%%%%%%%%%%%%%%%%%%%%%%%%%%%%%%%%%%%%%%%%%%%%%%%%%%%%%%%%%%%%%%%%%%%%%%%%%%%%%%%%%%%%%%%%%%%%%%%%%%%%%%%%%%%%%%

\section{Conclusions and discussion}

Along the article we have shown that manipulating boundary conditions for a class of bipartite systems it is possible to evolve a separable state into an entangled one. We have shown that we can achieve this dynamically, by changing the boundary conditions in a time dependent way, see Section \ref{simple}. This phenomenon also arises for fixed boundary conditions as the examples in Section \ref{quantum_rotor} show. The reason for this phenomenon lies in the existence of many self-adjoint extensions of a bipartite, symmetric system that lead to non-separable dynamics.    We have been able to characterize all boundary conditions leading to separable dynamics in a class of symmetric bipartite systems.

The systems exhibited are hybrid systems and one of the parties, the control or auxiliary system, is symmetric but not self-adjoint. The most remarkable fact about this class of systems is that the space of self-adjoint extensions is much larger than the space of extensions of the standalone control system and it incorporates boundary data that affect simultaneously the control and the controlled or bulk system.  The controlled system has unitary dynamics, but together with the control system it becomes non-separable. Hence, taking the partial trace with respect to system $A$ will not give us back the original dynamics $U^B_t$.
These ideas can be used to generate entangled states in a precise way, or to help to preserve entanglement without actually interacting with the ``bulk'' of the controlled system.
The relation of these ideas with recent work on adiabatic computation and robust entanglement in hybrid systems will be pursued in the future.

\end{document}